\theoremstyle{plain}
\def\Box{\vcenter{\vbox{\hrule\hbox{\vrule
     \vbox to 8.8pt{\hbox to 10pt{}\vfill}\vrule}\hrule}}}
\newcommand{\Ff}{{\mathbb F}}
\newcommand\cc{{\mathcal C}}        %
\newcommand\cF{{\mathbf c}}
\def\Tr{\operatorname{Tr}}
\newtheorem{thm}{Theorem}[section]
\newtheorem{lem}[thm]{Lemma}
\def\Tr{\operatorname{Tr}}
\begin{document}

%

\title{Six constructions of asymptotically optimal codebooks via the character sums
}

\author{Wei Lu$^1$, Xia Wu$^{1*}$, Xiwang Cao$^{2}$, Ming Chen$^3$
}

\maketitle

\let\thefootnote\relax\footnotetext{This work was supported by the National Natural Science Foundation of China (Grant No. 11801070, 11771007, 61572027) and the Basic Research
Foundation (Natural Science).}

\let\thefootnote\relax\footnotetext{$^1$School of Mathematics, Southeast University, Nanjing 210096, China. (Email: luwei1010@seu.edu.cn)}

\let\thefootnote\relax\footnotetext{$^1$School of Mathematics, Southeast University, Nanjing 210096, China. (Email:wuxia80@seu.edu.cn)}

\let\thefootnote\relax\footnotetext{$^2$Department of Math, Nanjing University of Aeronautics and Astronautics, Nanjing 211100, China. (Email: xwcao@nuaa.edu.cn)}

\let\thefootnote\relax\footnotetext{$^3$School of Information Science and Engineering, Southeast University, Nanjing 210096, China. (Email: chenming@seu.edu.cn)}

\let\thefootnote\relax\footnotetext{$^*$Corresponding author. (Email: wuxia80@seu.edu.cn)}

\begin{abstract}
 In this paper, using additive characters of finite field, we find a codebook which is equivalent to the measurement matrix in \cite{MM}. The advantage of our construction is that it can be generalized naturally to construct the other five classes of codebooks using additive and multiplicative characters of finite field.  We determine the maximal cross-correlation amplitude of these codebooks by the properties of characters and character sums. We prove that all the codebooks we constructed are asymptotically optimal with respect to the Welch bound. The parameters of these codebooks are new.

{\bf Keywods}: Codebook, asymptotic optimality, Welch bound, Gauss sum, Jacobi sum.

{\bf Mathematics Subject Classification}: 94A05\ 11T24.
\end{abstract}

\section{Introduction}
An $(N, K)$ codebook $\cc=\{\mathbf{c}_0,\mathbf{c}_1, ...,\mathbf{c}_{N-1}\}$ is a set of $N$ unit-norm complex vectors  $\mathbf{c}_i \in\mathbb{C}^K$ over an alphabet \textit{A}, where $i=0, 1, \ldots, N-1$. The size of \textit{A} is called the alphabet size of $\cc$. As a performance measure of a codebook in practical applications, the maximum magnitude of inner products between a pair of distinct vectors in $\cc$ is defined by
$$
I_{max}(\cc)=\underset{0\leq i\neq j \leq N-1}{\max}|\mathbf{c}_i\mathbf{c}_j^H|,
$$
where $\mathbf{c}_j^H$ denotes the conjugate transpose of the complex vector $\mathbf{c}_j$.
To evaluate an $(N, K)$ codebook $\cc$, it is important to find the minimun achievable $I_{max}(\cc)$ or its lower bound. The Welch bound \cite{Welch} provides a well-known lower bound on $I_{max}(\cc)$,
$$
I_{max}(\cc)\geq I_W=\sqrt{\frac{N-K}{(N-1)K}}.
$$
The equality holds if and only if for all pairs of $(i,j)$ with $i\neq j$
$$|\mathbf{c}_i\mathbf{c}_j^H|=\sqrt{\frac{N-K}{(N-1)K}}.$$
%

A codebook $\cc$ achieving the Welch bound equality is called a maximum-Welch-bound-equality (MWBE) codebook \cite{MWEB} or an equiangular tight frame \cite{frame}. MWBE codebooks are employed in various applications including code-division multiple-access(CDMA) communication systems \cite{Apl.CDMA}, communications \cite{MWEB}, combinatorial designs \cite{Apl.cd1,Apl.cd2,Apl.cd3}, packing \cite{Apl.pa}, compressed sensing \cite{Apl.cs}, coding theory \cite{Apl.code} and quantum computing \cite{Apl.qc}. To our knowledge, only the following  MWBE codebooks are presented as follows:

\begin{itemize}
\item $(N, N)$ orthogonal MWBE codebooks for any $N>1$ \cite{MWEB}, \cite{Apl.cd3};
\item $(N,N-1)$ MWBE codebooks for $N>1$ based on discrete Fourier transformation matrices \cite{MWEB}, \cite{Apl.cd3} or $m$-sequences \cite{MWEB};
\item $(N, K)$ MWBE codebooks from conference matrices \cite{Apl.pa}, \cite{MWEB1}, where $N=2K=2^{d+1}$ for a positive integer $d$ or $N=2K=p^d+1$  for a prime $p$ and a positive integer $d$;
\item $(N, K)$ MWBE codebooks based on $(N, K, \lambda)$ difference sets in cyclic groups \cite{Apl.cd3} and abelian groups \cite{Apl.cd1}, \cite{Apl.cd2};
\item $(N, K)$ MWBE codebooks from $(2, k, \nu)$-Steiner systems \cite{MWEB2};
\item $(N, K)$ MWBE codebooks depended on graph theory and finite geometries \cite{F1,F2,F3,F4}.
\end{itemize}

The construction of an MWBE codebook is known to be very hard in general, and the known classes of MWBE codebooks only exist for very restrictive $N$ and $K$. Many researches have been done instead to construct near optimal codebooks, i.e., codebook $\cc$ whose $I_{max}(\cc)$ nearly achieves  the Welch bound. In \cite{MWEB}, Sarwate gave some nearly optimal codebooks from codes and signal sets.  As an extension of the optimal codebooks based on difference sets, various types of near optimal codebooks based on almost difference sets, relative difference sets and cyclotomic classes were proposed, see \cite{Apl.cd1,NM2,NM5,NM6,zhou}. Near optimal codebooks constructed from binary row selection sequences were presented in \cite{NM1,NM3Yu}. In \cite{AL2,Luo,Luo2,Luo3}, some near optimal codebooks were constructed  via Jacobi sums and hyper Eisenstein sum.

In \cite{MM}, the authors combined a Reed-Solomon generator matrix with itself by the tensor product and employed this generated matrix to construct a complex measurement matrix. They proved that this matrix is asymptotically optimal according to the Welch bound. In this paper, we find a codebook which is equivalent to the measurement matrix in \cite{MM}. The codebook is actually the first construction in Section 3, using additive characters of finite field. The advantage of our construction is that it can be generalized naturally to construct the other five classes of codebooks using additive and multiplicative characters of finite field. We determine the maximal cross-correlation amplitude of these codebooks by the properties of characters and character sums.  All of these codebooks we constructed are near optimal according to the Welch bound. As a comparison, in Table 1, we list the parameters of some known classes of near optimal codebooks and those of the new ones.

This paper is organized as follows. In section 2, we recall some notations and basic results which will be needed in our discussion. In section 3, we present our six constructions of near optimal codebooks.  In section 4, we derive another family of codebooks, which are also near optimal. In section 5, we conclude this paper.

 \begin{table}[!htbp]
 \newcommand{\tabincell}[2]{\begin{tabular}{@{}#1@{}}#2\end{tabular}}
 \caption{\small The parameters of codebooks asymptotically meeting the Welch bound}
 \label{tabl1}
 \centering
  \setlength{\tabcolsep}{1mm}{
 \begin{tabular}{|c|c|c|c|}
   \hline
   Parameters $(N,K)$ & $I_{max}$ & References \\ \hline
  \tabincell{c}{ $(p^n,K=\frac{p-1}{2p}(p^n+p^{n/2})+1)$ \\
  with odd $p$ }& $\frac{(p+1)p^{n/2}}{2pK}$ & \cite{NM1} \\ \hline
  $(q^2, \frac{(q-1)^2}{2})$, $q=p^s$ with odd $p$ & $\frac{q+1}{(q-1)^2}$ & \cite{NM5} \\ \hline
  $q(q+4), \frac{q+1}{2}$, $q$ is a prime power & $\frac{1}{q+1}$ & \cite{Li} \\ \hline
    $(q, \frac{(q+3)(q+1)}{2})$, $q$ is a prime power  & $\frac{\sqrt{q}+1}{q-1}$ & \cite{Li} \\ \hline
   $(p^n-1,\frac{p^n-1}{2})$ with odd $p$ & $\frac{\sqrt{p^n}+1}{p^n-1}$ & \cite{NM3Yu} \\ \hline
   $(q^l+q^{l-1}-1,q^{l-1})$ for any $l>2$ & $\frac{1}{\sqrt{q^{l-1}}}$ & \cite{zhou} \\ \hline
    \tabincell{c} {$((q-1)^k+q^{k-1},q^{k-1}$),\\ for any $k>2$ and $q\geq4$} & $\frac{\sqrt{q^{k+1}}}{(q-1)^k+(-1)^{k+1}}$ & \cite{AL2} \\ \hline
  \tabincell{c} {$((q-1)^k+K,K$), for any $k>2$, \\ where $K=\frac{(q-1)^k+(-1)^{k+1}}{q}$ } & $\frac{\sqrt{q^{k-1}}}{K}$ & \cite{AL2} \\ \hline
   \tabincell{c}{$((q^s-1)^n+K,K)$, \\ for any $s>1$ and $n>1$, \\ where $K=\frac{(q^s-1)^n+(-1)^{n+1}}{q})$ } & $\frac{\sqrt{q^{sn+1}}}{(q^s-1)^n+(-1)^{n+1}}$ & \cite{Luo} \\ \hline
   \tabincell{c}{$((q^s-1)^n+q^{sn-1},q^{sn-1})$, \\ for any $s>1$ and $n>1$} & $\frac{\sqrt{q^{sn+1}}}{(q^s-1)^n+(-1)^{n+1}}$ & \cite{Luo} \\ \hline
   \tabincell{c}{$(q-1,\frac{q(r-1)}{2r})$, \\ $r=p^t, q=r^s$, with odd $p$ and $p\nmid s$} & $\frac{\sqrt{r}}{\sqrt{q}(\sqrt{r}-1)K}$ & \cite{Wu} \\ \hline
   \tabincell{c}{$(q^2,\frac{q(q+1)(r-1)}{2r})$, \\ $r=p^t, q=r^s$, with odd $p$} & $\frac{(r+1)q}{2rK}$ & \cite{Wu} \\ \hline
   \tabincell{c}{$(q^3,q^2)$ and $(q^3+q^2,q^2)$, \\ $q$ is a prime power} & $\frac{1}{q}$ & this paper \\ \hline
   \tabincell{c}{$((q-1)q^2,(q-1)q)$ and $(q^2-1,(q-1)q)$, \\ $q$ is a prime power} & $\frac{1}{q-1}$ & this paper \\ \hline
   \tabincell{c}{$((q-1)q^2,(q-1)^2)$ and $(q^3-2q+1,(q-1)^2)$, \\ $q$ is a prime power} & $\frac{q}{(q-1)^2}$ & this paper \\ \hline
   \tabincell{c}{$((q-1)^2q,(q-1)^2)$ and $(q^3-q^2-q+1,(q-1)^2)$, \\ $q$ is a prime power} & $\frac{q}{(q-1)^2}$ & this paper \\ \hline
   \tabincell{c}{$((q-1)^2q,(q-1)(q-2))$ and $(q^3-q^2-2q+2,(q-1)(q-2))$, \\ $q$ is a prime power} & $\frac{q}{(q-1)(q-2)}$ & this paper \\ \hline
    \tabincell{c}{$((q-1)^3,(q-2)^2)$ and $(q^3-2q^2-q+3,(q-2)^2)$, \\ $q$ is a prime power} & $\frac{q}{(q-2)^2}$ & this paper \\ \hline
 \end{tabular}}
 \end{table}

\section{Preliminaries}
In this section, we introduce some basic results on characters and character sums over finite fields, which will play important roles in the constructions of codebooks.

In this paper, we set $q$ be a power of a prime $p$, and $\Ff_q$ be a finite field with $q$ elements. For a set $E$, $\#E$  denotes the cardinality of $E$.

%
%

\subsection{characters over finite fields}
Let $\Ff_q$ be a finite field, in this subsection, we recall the definitions of the additive and multiplicative characters of $\Ff_q$.

For each $a\in \mathbb{F}_q$, an additive character of $\mathbb{F}_q$ is defined by the function $\chi_a(x)=\zeta_p^{\Tr_{q/p}(ax)}$, where $\zeta_p$ is a primitive $p-$th root of complex unity and $\Tr_{q/p}(\cdot)$ be the trace functions from $\Ff_q$ to $\Ff_p$. By the definition, $\chi_a(x)=\chi_1(ax)$. When $a=0$, we call $\chi_0$ the trivial additive character of $\mathbb{F}_q$. When  $a=1$, we call $\chi_1$ the canonical additive character of $\mathbb{F}_q$. Let $\widehat{\Ff_q}$ be the set of all additive characters of $\Ff_q$.The orthogonal relation of additive characters (see \cite{field}) is given by
$$
 \sum_{x\in \mathbb{F}_q}\chi_a(x)=\left\{
            \begin{array}{ll}
              q,& \hbox{if\ $a=0$,}\\
              0,& \hbox{otherwise.}
            \end{array}
          \right.
$$

As in \cite{field}, the multiplicative characters of $\mathbb{F}_q$ is defined as follows. For $j=0,1,...,q-2$, the functions $\varphi_j$ defined by
$$\varphi_j(\alpha^i)=\zeta_{q-1}^{ij},$$
are all the multiplicative characters of $\mathbb{F}_q$,
where $\alpha$ is a primitive element of $\mathbb{F}_q^*$, and $0\leq i\leq q-2$. If $j=0$, we have $\varphi_0(x)=1$ for any $x\in \mathbb{F}_q^*$, $\varphi_0$ is called the trivial multiplicative character of $\mathbb{F}_q$. Let $\widehat{\Ff_q^*}$ be the set of all the multiplicative characters of $\Ff_q^*$.

Let $\varphi$ be a multiplicative character of $\Ff_q$, the orthogonal relation of multiplicative characters (see \cite{field}) is given by
$$
 \sum_{x\in \mathbb{F}_q^*}\varphi(x)=\left\{
            \begin{array}{ll}
              q-1,& \hbox{if\ $\varphi=\varphi_0$,}\\
              0,& \hbox{otherwise.}
            \end{array}
          \right.
$$

\subsection{Character sums over finite fields}
\subsubsection{Gauss sum}
Let $\varphi$ be a multiplicative character of $\Ff_q$ and $\chi$  an additive character of $\Ff_q$. Then the Gauss sum over $\Ff_q$ is given by
\begin{equation*}
  G(\varphi,\chi)=\sum_{x\in \mathbb{F}_q^*}\varphi(x){\chi}(x).
\end{equation*}
For simplicity, we write $G(\varphi,\chi_1)$ over $\Ff_q$ simply as $g(\varphi)$.
It is easy to see the absolute value of $G(\varphi,\chi)$ is at most $q-1$, but is much smaller in general. The following lemma shows all the cases.
\begin{lem}{\rm \cite[Theorem 5.11]{field}}\label{gauss}
Let $\varphi$ be a multiplicative character and $\chi$ an additive character of $\Ff_q$. Then the Gauss sum $G(\varphi,\chi)$ over $\Ff_q$ satisfies
$$
 G(\varphi,\chi)=\left\{
            \begin{array}{ll}
              q-1, & \hbox{if\ $\varphi=\varphi_0$, $\chi=\chi_0$,} \\
              -1,& \hbox{if\ $\varphi=\varphi_0$, $\chi\neq\chi_0$,}\\
              0,& \hbox{if\ $\varphi\neq\varphi_0$, $\chi=\chi_0$.}
            \end{array}
          \right.
$$
For $\varphi\neq\varphi_0$ and $\chi\neq\chi_0$, we have $\left|G(\varphi,\chi)\right|=\sqrt{q}$.
\end{lem}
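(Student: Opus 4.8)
The plan is to separate the four cases, dispatching the three degenerate ones immediately from the two orthogonality relations recalled above and reserving the real work for the generic case $\varphi\neq\varphi_0$, $\chi\neq\chi_0$.

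For the degenerate cases I would argue as follows. If $\varphi=\varphi_0$ and $\chi=\chi_0$, then every term $\varphi(x)\chi(x)$ equals $1$, so $G(\varphi,\chi)=\#\Ff_q^*=q-1$. If $\varphi=\varphi_0$ and $\chi\neq\chi_0$, then $G(\varphi,\chi)=\sum_{x\in\Ff_q^*}\chi(x)=\sum_{x\in\Ff_q}\chi(x)-\chi(0)=0-1=-1$ by the additive orthogonality relation. If $\varphi\neq\varphi_0$ and $\chi=\chi_0$, then $G(\varphi,\chi)=\sum_{x\in\Ff_q^*}\varphi(x)=0$ by the multiplicative orthogonality relation.

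The substantial case is $\varphi\neq\varphi_0$, $\chi\neq\chi_0$, where the trivial bound $|G(\varphi,\chi)|\le q-1$ must be improved all the way down to $\sqrt{q}$; this gap is the heart of the statement and signals that one has to extract genuine cancellation rather than estimate termwise. My approach would be to compute $|G(\varphi,\chi)|^2=G(\varphi,\chi)\overline{G(\varphi,\chi)}$, using $\overline{\varphi(x)}=\varphi(x^{-1})$ and $\overline{\chi(x)}=\chi(-x)$ to write it as the double sum $\sum_{x,y\in\Ff_q^*}\varphi(xy^{-1})\chi(x-y)$. The decisive maneuver is the change of variable $x=ty$ with $t\in\Ff_q^*$, which decouples the two summations into $\sum_{t\in\Ff_q^*}\varphi(t)\sum_{y\in\Ff_q^*}\chi\bigl(y(t-1)\bigr)$.

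From here the two orthogonality relations finish the job: the inner additive sum equals $q-1$ when $t=1$ and equals $-1$ when $t\neq1$ (since $t-1\neq0$ lets $y(t-1)$ range over all of $\Ff_q$), whence $|G(\varphi,\chi)|^2=(q-1)-\sum_{t\in\Ff_q^*,\,t\neq1}\varphi(t)$, and the multiplicative orthogonality relation $\sum_{t\in\Ff_q^*}\varphi(t)=0$ collapses the remaining tail to $-1$, giving $|G(\varphi,\chi)|^2=q$. I expect the only genuine obstacle to be the bookkeeping around the substitution $x=ty$ and the need to isolate the $t=1$ contribution before invoking orthogonality a second time; everything else reduces to the relations already in hand.
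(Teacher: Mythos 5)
Your proof is correct. The paper itself offers no proof of this lemma---it simply cites \cite[Theorem 5.11]{field}---and your argument (orthogonality relations for the three degenerate cases, then the computation of $|G(\varphi,\chi)|^2$ as a double sum decoupled by the substitution $x=ty$ and finished with both orthogonality relations) is precisely the standard proof given in that reference, carried out correctly.
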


\begin{lem}{\rm \cite{field}}\label{gauss1}
Gauss sums for the finite field $\Ff_q$ satisfy the following property:

$G(\varphi,\chi_{ab})=\overline{\varphi}(a)G(\varphi,\chi_{b})$ for $a\in \Ff_q^*$, $b\in \Ff_q$, where $\overline{\varphi}$ denotes the complex conjugate of $\varphi$;


\end{lem}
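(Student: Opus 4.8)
The plan is to prove the identity by a direct change of variables in the defining sum, exploiting that $a\in\Ff_q^*$ acts as a bijection on $\Ff_q^*$ by multiplication. First I would unwind the definitions: since $\chi_{ab}(x)=\chi_1(abx)$ by the relation $\chi_a(x)=\chi_1(ax)$ recorded in Section 2, the Gauss sum becomes
\[
G(\varphi,\chi_{ab})=\sum_{x\in\Ff_q^*}\varphi(x)\chi_1(abx).
\]
Then I would substitute $y=ax$. Because $a\neq 0$, the map $x\mapsto ax$ permutes $\Ff_q^*$, so the index set is unchanged and $x=a^{-1}y$; this turns the sum into $\sum_{y\in\Ff_q^*}\varphi(a^{-1}y)\chi_1(by)$.

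Next I would use multiplicativity of $\varphi$ to write $\varphi(a^{-1}y)=\varphi(a^{-1})\varphi(y)$ and pull the constant factor $\varphi(a^{-1})$ out of the sum. The remaining sum is exactly $\sum_{y\in\Ff_q^*}\varphi(y)\chi_b(y)=G(\varphi,\chi_b)$, so it only remains to identify $\varphi(a^{-1})$ with $\overline{\varphi}(a)$. This is where the one genuine point enters: for $a\in\Ff_q^*$ the value $\varphi(a)$ is a root of unity (indeed $\varphi_j(\alpha^i)=\zeta_{q-1}^{ij}$), hence lies on the unit circle, so $\varphi(a)^{-1}=\overline{\varphi(a)}$, and therefore $\varphi(a^{-1})=\varphi(a)^{-1}=\overline{\varphi}(a)$. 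Combining these steps gives $G(\varphi,\chi_{ab})=\overline{\varphi}(a)G(\varphi,\chi_b)$.

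There is essentially no hard obstacle here; the core of the proof is a one-line substitution, and the only care needed is bookkeeping in the degenerate situations, which I would verify for completeness. If $\varphi=\varphi_0$, then $\overline{\varphi}(a)=1$ and the identity is immediate; if $b=0$, then both $\chi_{ab}$ and $\chi_b$ coincide with the trivial additive character $\chi_0$, so both sides reduce to the same expression, consistent with the evaluations in Lemma \ref{gauss}. Thus the substitution argument covers all cases, and the multiplicativity of $\varphi$ together with the unit-modulus of its values is the whole content of the statement.
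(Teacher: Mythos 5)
Your proof is correct, and it is the standard change-of-variables argument: the paper itself offers no proof of this lemma (it simply cites Lidl and Niederreiter), and your substitution $y=ax$ together with $\varphi(a^{-1})=\overline{\varphi}(a)$ is exactly the argument in that reference. One small remark: the substitution already yields $G(\varphi,\chi_{ab})=\varphi(a^{-1})G(\varphi,\chi_b)$ uniformly for all $b\in\Ff_q$ (including $b=0$) and all $\varphi$, so your separate verification of the degenerate cases is harmless but unnecessary.
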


\subsubsection{Jacobi sum}
The definition of a multiplicative character $\varphi$ can be extended as follows.

$$
\varphi(0)=\left\{
            \begin{array}{ll}
              1, & \hbox{if\ $\varphi=\varphi_0$,} \\
               0,& \hbox{if\ $\varphi\neq\varphi_0$.}
            \end{array}
          \right.
$$

Let $\varphi_1$ and $\varphi_2$ be multiplicative characters of $\Ff_q$. The sum
$$J(\varphi_1,\varphi_2)=\sum_{c_1+c_2=1, c_1,c_2\in \Ff_q}\varphi_1(c_1)\varphi_2(c_2)$$
is called a Jacobi sum in $\Ff_q$.

The values of Jacobi sums are given as follows.
\begin{lem}{\rm [\cite{field},Theorem 5.19,Theorem 5.20]}\label{jacobi}
For the values of Jacobi sums, we have the following results.

(1) If $\varphi_1$ and $\varphi_2$ are trivial, then $J(\varphi_1,\varphi_2)=q$.

(2) If one of the $\varphi_1$ and  $\varphi_2$ is trivial, the other is nontrivial, $J(\varphi_1,\varphi_2)=0$.

(3) If $\varphi_1$ and  $\varphi_2$ are both nontrivial and $\varphi_1\varphi_2$ is nontrivial, then $|J(\varphi_1,\varphi_2)|=\sqrt{q}$.

(4) If $\varphi_1$ and  $\varphi_2$ are both nontrivial and $\varphi_1\varphi_2$ is trivial, then $|J(\varphi_1,\varphi_2)|=1$.

\end{lem}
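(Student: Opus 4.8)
The plan is to reduce all four cases to facts about Gauss sums, whose magnitudes are already pinned down by Lemma \ref{gauss}. Cases (1) and (2) follow directly from the definition together with the convention for $\varphi(0)$ and the orthogonality relations of Subsection 2.1. For (1), when both characters are trivial every summand $\varphi_1(c_1)\varphi_2(c_2)$ equals $1$ (including the terms with $c_1=0$ or $c_2=0$, by the convention $\varphi_0(0)=1$), and there are exactly $q$ pairs $(c_1,c_2)$ with $c_1+c_2=1$, so $J=q$. For (2), say $\varphi_1=\varphi_0$ and $\varphi_2$ nontrivial; then $J(\varphi_1,\varphi_2)=\sum_{c_2\in\Ff_q}\varphi_2(c_2)=\varphi_2(0)+\sum_{c_2\in\Ff_q^*}\varphi_2(c_2)=0$, using $\varphi_2(0)=0$ and the orthogonality of a nontrivial multiplicative character.

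For the substantive cases (3) and (4), where both $\varphi_1,\varphi_2$ are nontrivial, the central step is to derive the product formula linking Jacobi and Gauss sums. I would expand
$$g(\varphi_1)g(\varphi_2)=\sum_{x,y\in\Ff_q^*}\varphi_1(x)\varphi_2(y)\chi_1(x+y)$$
and organize the sum according to the value $s=x+y$. The diagonal $s=0$ forces $y=-x$ and contributes $\varphi_2(-1)\sum_{x\in\Ff_q^*}(\varphi_1\varphi_2)(x)$, while each $s\neq 0$ slice, after the substitution $x=sc_1,\ y=sc_2$ with $c_1+c_2=1$, contributes $(\varphi_1\varphi_2)(s)\chi_1(s)\,J(\varphi_1,\varphi_2)$ (the terms $c_1=0$ or $c_2=0$ drop out since both characters are nontrivial). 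Summing the off-diagonal slices over $s\neq0$ yields $J(\varphi_1,\varphi_2)\,g(\varphi_1\varphi_2)$, so that
$$g(\varphi_1)g(\varphi_2)=\varphi_2(-1)\sum_{x\in\Ff_q^*}(\varphi_1\varphi_2)(x)+J(\varphi_1,\varphi_2)\,g(\varphi_1\varphi_2).$$

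When $\varphi_1\varphi_2$ is nontrivial (case (3)) the orthogonality relation kills the diagonal term, leaving $g(\varphi_1)g(\varphi_2)=J(\varphi_1,\varphi_2)g(\varphi_1\varphi_2)$. Since $\varphi_1,\varphi_2,\varphi_1\varphi_2$ are all nontrivial and $\chi_1$ is nontrivial, Lemma \ref{gauss} gives $|g(\varphi_1)|=|g(\varphi_2)|=|g(\varphi_1\varphi_2)|=\sqrt q$, and taking absolute values yields $|J(\varphi_1,\varphi_2)|=\sqrt q$. When $\varphi_1\varphi_2$ is trivial (case (4)), we have $\varphi_2=\overline{\varphi_1}$; the diagonal term becomes $\varphi_2(-1)(q-1)$ and $g(\varphi_1\varphi_2)=g(\varphi_0)=-1$ by Lemma \ref{gauss}. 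Here the extra ingredient I would need is the identity $g(\varphi_1)g(\overline{\varphi_1})=\varphi_1(-1)q$, which follows from $\overline{g(\varphi_1)}=\varphi_1(-1)g(\overline{\varphi_1})$ (a consequence of Lemma \ref{gauss1} applied with $a=-1$) together with $|g(\varphi_1)|^2=q$. Substituting these into the product formula, and using $\varphi_2(-1)=\varphi_1(-1)$ (as $\varphi_1(-1)=\pm1$), gives $J(\varphi_1,\varphi_2)=\varphi_2(-1)(q-1)-\varphi_1(-1)q=-\varphi_1(-1)$, whence $|J(\varphi_1,\varphi_2)|=1$.

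The only real obstacle is the bookkeeping in deriving the product formula: one must correctly isolate the $s=0$ diagonal, justify the reparametrization $x=sc_1,\ y=sc_2$ on each nonzero slice, and keep track of the $\varphi(0)$ convention so that the boundary terms $c_i=0$ genuinely vanish. Once that identity is in hand, cases (3) and (4) are just applications of Lemma \ref{gauss} and the conjugation identity extracted from Lemma \ref{gauss1}.
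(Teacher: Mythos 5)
Your proof is correct. The paper does not prove this lemma at all---it is quoted verbatim from Lidl--Niederreiter (Theorems 5.19 and 5.20 of \cite{field})---and your argument is essentially the standard one given there: the trivial cases by the $\varphi(0)$ convention and orthogonality, and the nontrivial cases via the identity $g(\varphi_1)g(\varphi_2)=\varphi_2(-1)\sum_{x\in\Ff_q^*}(\varphi_1\varphi_2)(x)+J(\varphi_1,\varphi_2)\,g(\varphi_1\varphi_2)$, combined with $|g(\varphi)|=\sqrt{q}$ and $g(\varphi)g(\overline{\varphi})=\varphi(-1)q$ for nontrivial $\varphi$. All the delicate points (the $s=0$ slice, the vanishing boundary terms $c_i=0$, and the sign bookkeeping giving $J=-\varphi_1(-1)$ in case (4)) are handled correctly.
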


\subsection{A general construction of codebooks}
Let $D$ be a set and $K=\#D$. Let $E$ be a set of some functions which satisfy
$$f:D\rightarrow S,\ \ \hbox{where\  S\  is\  the\ unit\ circle.}$$
A general construction of codebooks is stated as follows in the complex plane,
$$\cc(D;E)=\{\cF_f:=\frac{1}{\sqrt{K}}(f(x))_{x\in D}, f\in E\}.$$

\section{six constructions of codebooks}
In this section, by multiplicative characters, additive characters, Gauss sums and Jacobi sums we construct new series of codebooks  which nearly meet the Welch bound.

\subsection{The first construction of codebooks}
Let

$$
D_1:=\{(x,y,z)\in \Ff_{q}\times\Ff_q\times\Ff_q: z=xy\}.
$$
Then $\#D_1=q^2$.

The codebook $\cc_1$ is constructed as
$$\cc_1:=\{\frac{1}{q}(\chi_a(x)\chi_b(y)\chi_c(z))_{(x,y,z)\in D_1}:a,b,c\in \Ff_q\}.$$
We can derive the following Theorem.

\begin{thm}\label{th01}
$\cc_1$ is a codebook with $N_1=q^3$, $K_1=q^2$ and $I_{max}(\cc_1)=\frac{1}{q}$.
\end{thm}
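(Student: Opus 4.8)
The plan is to reduce the inner product of two codewords to a two-variable additive character sum and evaluate it by iterated orthogonality. First I would record the basic parameters: since $x,y$ range freely over $\Ff_q$ while $z=xy$ is then determined, we have $K_1=\#D_1=q^2$, so the normalizing factor $\frac1q=\frac1{\sqrt{K_1}}$ is correct; because each entry $\chi_a(x)\chi_b(y)\chi_c(z)$ lies on the unit circle, every codeword has unit norm. The codewords are indexed by $(a,b,c)\in\Ff_q^3$, so there are at most $q^3$ of them, and distinctness (hence $N_1=q^3$) will drop out of the inner-product computation below.

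Next I would compute, for parameters $(a,b,c)$ and $(a',b',c')$, the inner product. Using $\chi_a(x)\overline{\chi_{a'}(x)}=\chi_1((a-a')x)$ and likewise in $y$ and $z$, and writing $u=a-a'$, $v=b-b'$, $w=c-c'$, the defining relation $z=xy$ turns the inner product into $\frac1{q^2}S$, where
$$
S=\sum_{x,y\in\Ff_q}\chi_1\big(ux+vy+wxy\big).
$$
The key step is to sum over $y$ first: by the orthogonality of additive characters, the inner sum $\sum_{y}\chi_1\big((v+wx)y\big)$ equals $q$ when $v+wx=0$ and $0$ otherwise.

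If $w=0$, then $v+wx=v$ is constant in $x$, and combining this with $\sum_x\chi_1(ux)$ one checks that $S=0$ for every $(u,v,w)\neq(0,0,0)$ with $w=0$. If $w\neq0$, there is exactly one value $x=-v/w$ making the inner sum nonzero, so $S=q\,\chi_1(-uv/w)$ and hence $|S|=q$. Consequently, for distinct codewords $|S|\le q$, with the value $q$ actually attained (any $(u,v,w)$ with $w\neq0$); since $S=q^2$ occurs only for $(u,v,w)=(0,0,0)$, all $q^3$ codewords are indeed distinct. Dividing by $q^2$ gives $I_{max}(\cc_1)=q/q^2=1/q$. There is no serious obstacle here: the whole argument rests on the observation that fixing $z=xy$ makes $S$ a \emph{bilinear} exponential sum, so that summing in one variable collapses it to a single surviving term via orthogonality—no Gauss sums are needed for this first construction.
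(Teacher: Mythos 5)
Your proof is correct and follows essentially the same route as the paper's: reduce the inner product to the bilinear sum $\sum_{x,y}\chi_1(ux+vy+wxy)$, collapse the $y$-sum by orthogonality of additive characters, and split into the cases $w\neq 0$ (one surviving term, magnitude $q$) and $w=0$ (sum vanishes). Your additional observation that $S=q^2$ forces $(u,v,w)=(0,0,0)$, which certifies that all $q^3$ parameter triples give distinct codewords, is a small point the paper takes for granted, but it does not change the argument.
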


\begin{proof}
By the definition of $\cc_1$, it contains $q^3$ codewords of length $q^2$. Then it is easy to see $N_1=q^3$ and $K_1=q^2$. Let $\cF$ and $\cF'$ be any different codewords in $\cc_1$, $\cF=\frac{1}{q}(\chi_{a_1}(x)\chi_{b_1}(y)\chi_{c_1}(z))_{(x,y,z)\in D_1}$ and $\cF'=\frac{1}{q}(\chi_{a_2}(x)\chi_{b_2}(y)\chi_{c_2}(z))_{(x,y,z)\in D_1}$, where $a_1,a_2,b_1,b_2,c_1,c_2\in \Ff_q$. The correlation of $\cF$ and $\cF'$ is as follows. Then

\begin{eqnarray*}
&&K_1\cF\cF'^H\\
&=&\sum_{(x,y,z)\in D_1}\chi_{a_1}(x)\chi_{b_1}(y)\chi_{c_1}(z)\overline{\chi_{a_2}(x)\chi_{b_2}(y)\chi_{c_2}(z)}\\
&=&\sum_{(x,y,z)\in D_1}\chi((a_1-a_2)x+(b_1-b_2)y+(c_1-c_2)z)\\
&=&\sum_{x\in \Ff_q, y\in \Ff_q}\chi(ax+by+cxy),
\end{eqnarray*}
where $a=a_1-a_2, b=b_1-b_2, c=c_1-c_2$, since $\cF\neq\cF'$, not all of $a, b$ and $c$ equal to 0.

Then
\begin{eqnarray*}
&&K_1\cF\cF'^H\\
&=& \sum_{x\in \Ff_q}\chi(ax)\sum_{y\in \Ff_q}\chi(b+cx)y\\
&=&\sum_{x\in \Ff_q, b+cx=0}\chi(ax)q
\end{eqnarray*}
The last equation holds by the orthogonal relation of additive characters.

When $c\neq 0$, we have
$$K_1\cF\cF'^H=q\chi(-\frac{ab}{c}).$$

When $c=0$, $b\neq 0$, it is easy to get $K_1\cF\cF'^H=0$.

When $c=0$ and $b=0$, since $\cF\neq \cF'$, we get $a\neq 0$. Then
$$
K_1\cF\cF'^H=q\sum_{x\in \Ff_q}\chi(ax)=0$$
by the orthogonal relation of additive characters.

Therefore,
we have $$I_{max}(\cc_1)=max\{|\cF\cF'^H|:\cF,\cF'\in \cc, and\ \cF\neq\cF'\}=\frac{q}{K_1}=\frac{1}{q} .$$

\end{proof}

Using Theorem \ref{th01}, we can derive the ratio of $I_{max}(\cc_1)$ of the proposed codebooks to that of the MWBE codebooks and show the near-optimality of the proposed codebooks as in the following theorem.

\begin{thm}\label{th02}
Let $I_{W1}$ be the Welch bound equality, for the given $N_1$, $K_1$ in the current section. We have

$$\lim_{r\rightarrow\infty}\frac{I_{max}(\cc_1)}{I_{W1}}=1,$$
then the codebooks we proposed are near optimal.

\end{thm}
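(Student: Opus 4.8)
The plan is to compute the ratio $I_{max}(\cc_1)/I_{W1}$ explicitly and take the limit, which is essentially a routine asymptotic estimate once the exact values are in hand. From Theorem \ref{th01} we have $I_{max}(\cc_1)=\frac{1}{q}$, and plugging $N_1=q^3$, $K_1=q^2$ into the Welch bound gives
$$
I_{W1}=\sqrt{\frac{N_1-K_1}{(N_1-1)K_1}}=\sqrt{\frac{q^3-q^2}{(q^3-1)q^2}}.
$$
So the first step is simply to form the quotient and simplify it symbolically.

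Next I would simplify the ratio. We have
$$
\frac{I_{max}(\cc_1)}{I_{W1}}=\frac{1/q}{\sqrt{\frac{q^3-q^2}{(q^3-1)q^2}}}=\frac{1}{q}\sqrt{\frac{(q^3-1)q^2}{q^3-q^2}}=\frac{1}{q}\sqrt{\frac{(q^3-1)q^2}{q^2(q-1)}}=\frac{1}{q}\sqrt{\frac{q^3-1}{q-1}}.
$$
Since $\frac{q^3-1}{q-1}=q^2+q+1$, this reduces cleanly to
$$
\frac{I_{max}(\cc_1)}{I_{W1}}=\frac{\sqrt{q^2+q+1}}{q}=\sqrt{\frac{q^2+q+1}{q^2}}=\sqrt{1+\frac{1}{q}+\frac{1}{q^2}}.
$$
This is the key computation, and it is the closest thing to an ``obstacle'' here, namely correctly tracking the algebra so that the expression collapses to a transparent form.

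Finally I would take the limit. One subtlety worth flagging is the variable in the statement: the theorem writes $\lim_{r\to\infty}$, but the free parameter governing the asymptotics is really $q$ (equivalently, $q=p^s$ grows without bound as the underlying field grows). Interpreting the limit as $q\to\infty$, the expression $\sqrt{1+\frac{1}{q}+\frac{1}{q^2}}$ tends to $\sqrt{1}=1$, since both $\frac{1}{q}$ and $\frac{1}{q^2}$ vanish. This establishes $\lim_{q\to\infty}\frac{I_{max}(\cc_1)}{I_{W1}}=1$, and because the ratio exceeds $1$ but converges to it, the codebook $\cc_1$ is asymptotically optimal with respect to the Welch bound. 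No deep machinery is needed beyond the exact value of $I_{max}(\cc_1)$ supplied by Theorem \ref{th01}; the entire argument is the simplification above followed by an elementary limit.
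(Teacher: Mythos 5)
Your proposal is correct and follows essentially the same route as the paper: both plug $N_1=q^3$, $K_1=q^2$ into the Welch bound, simplify the ratio to $\sqrt{(q^2+q+1)/q^2}$, and take the limit as $q\to\infty$ (your observation that the statement's $r$ should be $q$ is a fair catch of a typo in the paper). No gaps; nothing further is needed.
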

\begin{proof}
Note that $N_1=q^3$ and $K_1=q^2$. Then the corresponding Welch bound is
$$
I_{W1}=\sqrt{\frac{N_1-K_1}{(N_1-1)K_1}}=\sqrt{\frac{q^3-q^2}{(q^3-1)q^2}}=\sqrt{\frac{1}{q^2+q+1}},
$$
we have
$$\frac{I_{max}(\cc_1)}{I_{W1}}=\sqrt{\frac{q^2+q+1}{q^2}}.$$

It is obvious that $\lim_{q\rightarrow+\infty }\frac{I_{max}(\cc_1)}{I_{W1}}=1$. The codebook $\cc_1$ asymptotically meets the Welch bound. This completes the proof.
\end{proof}

In Table \ref{table2}, we provide some explicit values of the parameters of the codebooks we proposed for some given $q$, and corresponding numerical data of the Welch bound for comparison. The numerical results show  that the codebooks $\cc_1$ asymptotically meet the Welch bound.
\begin{table}[!htbp]
 \newcommand{\tabincell}[2]{\begin{tabular}{@{}#1@{}}#2\end{tabular}}
 \caption{\footnotesize Parameters of the $(N_1,K_1)$ codebook of Section IV}
 \label{table2}
\centering
 \setlength{\tabcolsep}{1mm}{
 \begin{tabular}{|c|c|c|c|c|c|}
   \hline
  $q$ & $N_1$ & $K_1$ & $I_{max}(\cc_1)$ & $I_{W1}$ & $\frac{I_{max}(\cc_1)}{I_{W1}}$ \\ \hline
 $3$& $27$& $9$& $0.3333$& $0.2774$& $1.2019$ \\ \hline
 $5$& $125$& $25$& $0.2000$& $0.1796$& $1.1136$ \\ \hline
 $13$& $2197$& $169$& $0.0769$& $0.0739$& $1.0406$ \\ \hline
 $49$& $117649$& $2401$& $0.0204$& $0.0202$& $1.0104$ \\ \hline
 $5^{3}$& $1953125$& $15625$& $0.0080$& $0.0080$& $1.0040$ \\ \hline
 $5^{4}$& $244140625$& $390625$& $0.0016$& $0.0016$& $1.0008$ \\ \hline
 $7^{4}$& $1.3841e+10$& $5764801$& $4.1649e-04$& $4.1641e-04$& $1.0002$ \\ \hline
 \end{tabular}}
 \end{table}

\subsection{The second construction of codebooks}
Let

$$
D_2:=\{(x,y,z)\in \Ff_{q}^*\times\Ff_q\times\Ff_q: z=xy\}.
$$
Then $\#D_2=(q-1)q$.

The codebook $\cc_2$ is constructed as
$$\cc_2:=\{\frac{1}{\sqrt{(q-1)q}}(\varphi(x)\chi_b(y)\chi_c(z))_{(x,y,z)\in D_2}:\varphi\in \widehat{\Ff_q^*},b,c\in \Ff_q\}.$$
We can derive the following Theorem.

\begin{thm}\label{th21}
$\cc_2$ is a codebook with $N_2=(q-1)q^2$, $K_2=(q-1)q$ and $I_{max}(\cc_2)=\frac{1}{q-1}$.
\end{thm}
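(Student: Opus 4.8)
The plan is to follow the same strategy as in Theorem~\ref{th01}, computing the inner product $K_2\cF\cF'^H$ of two distinct codewords and reducing it to a character sum that can be evaluated using the orthogonality relations and the Gauss sum. First I would take two codewords indexed by $(\varphi_1,b_1,c_1)$ and $(\varphi_2,b_2,c_2)$ and write
$$
K_2\cF\cF'^H=\sum_{(x,y,z)\in D_2}\varphi_1(x)\overline{\varphi_2(x)}\,\chi_{b_1-b_2}(y)\,\chi_{c_1-c_2}(z).
$$
Since $z=xy$ and $x\in\Ff_q^*$, I would substitute $\chi_c(xy)=\chi_1(cxy)$ and collapse the sum over $y$, setting $\varphi=\varphi_1\overline{\varphi_2}$, $b=b_1-b_2$, $c=c_1-c_2$, so that
$$
K_2\cF\cF'^H=\sum_{x\in\Ff_q^*}\varphi(x)\sum_{y\in\Ff_q}\chi_1\bigl((b+cx)y\bigr).
$$
By the orthogonality of additive characters the inner sum is $q$ when $b+cx=0$ and $0$ otherwise, exactly as in the first construction.

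Next I would split into cases according to whether $c=0$. \emph{Case $c\neq 0$:} the condition $b+cx=0$ pins down the single value $x=-b/c$, which lies in $\Ff_q^*$ precisely when $b\neq 0$; then $K_2\cF\cF'^H=q\,\varphi(-b/c)$, which has absolute value $q$ when $x=-b/c\in\Ff_q^*$ (and is $0$ if $b=0$, forcing $x=0\notin\Ff_q^*$). \emph{Case $c=0$, $b\neq 0$:} no $x$ satisfies $b+cx=0$, so the sum is $0$. \emph{Case $c=0$, $b=0$:} here the codewords can only differ through the characters, so $\varphi\neq\varphi_0$, and the inner $y$-sum gives $q$ for every $x$, leaving $K_2\cF\cF'^H=q\sum_{x\in\Ff_q^*}\varphi(x)=0$ by the orthogonality of multiplicative characters (Lemma from Section~2). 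Collecting the cases, the maximum nonzero value of $|K_2\cF\cF'^H|$ is $q$, whence $I_{max}(\cc_2)=q/K_2=q/((q-1)q)=1/(q-1)$.

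The main subtlety, compared with the first construction, is the bookkeeping in the case $c\neq 0$: one must check that the unique solution $x=-b/c$ actually lies in the restricted domain $\Ff_q^*$ rather than in all of $\Ff_q$, so the contribution is present only when $b\neq 0$. This is where the asymmetry $\Ff_q^*\times\Ff_q\times\Ff_q$ (as opposed to $\Ff_q^3$) enters, but it does not change the magnitude of the maximal correlation. I would also confirm at the outset that $N_2=\#\{(\varphi,b,c)\}=(q-1)\cdot q\cdot q=(q-1)q^2$ and $K_2=\#D_2=(q-1)q$, so the stated parameters follow immediately from the definitions. No Gauss sum is actually needed here since the only nontrivial cancellation comes from the orthogonality relations; the Gauss-sum machinery of Lemma~\ref{gauss} will only become essential in the later constructions where the multiplicative character survives a genuine twist by a nontrivial additive character.
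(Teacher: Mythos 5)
Your proposal is correct and follows essentially the same route as the paper: reduce $K_2\cF\cF'^H$ to $\sum_{x\in\Ff_q^*,\,b+cx=0}\varphi(x)\,q$ via additive-character orthogonality, then split into the cases $c\neq 0$, $c=0\wedge b\neq 0$, and $b=c=0$ (handled by multiplicative-character orthogonality), giving $I_{max}(\cc_2)=q/K_2=1/(q-1)$. Your explicit check that the solution $x=-b/c$ must lie in $\Ff_q^*$ (so the $c\neq 0$, $b=0$ subcase vanishes) is the same observation the paper makes, just stated a bit more carefully.
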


\begin{proof}
By the definition of $\cc_2$, it contains $(q-1)q^2$ codewords of length $(q-1)q$. Then it is easy to see $N_2=(q^2-1)q$ and $K_2=(q-1)q$. Let $\cF$ and $\cF'$ be any different codewords in $\cc_2$,  $\cF=\frac{1}{\sqrt{K_2}}(\varphi_j(x)\chi_{b_1}(y)\chi_{c_1}(z))_{(x,y,z)\in D_2}$ and $\cF'=\frac{1}{\sqrt{K_2}}(\varphi_k(x)\chi_{b_2}(y)\chi_{c_2}(z))_{(x,y,z)\in D_2}$,  where $\varphi_j,\varphi_k\in \widehat{\Ff_q^*},b_1,b_2,c_1,c_2\in \Ff_q$. Then the correlation of $\cF$ and $\cF'$ is as follows.

\begin{eqnarray*}
&&K_2\cF\cF'^H\\
&=&\sum_{(x,y,z)\in D_2}\varphi_j(x)\chi_{b_1}(y)\chi_{c_1}(z)\overline{\varphi_k(x)\chi_{b_2}(y)\chi_{c_2}(z)}\\
&=&\sum_{x\in \Ff_q^*, y\in \Ff_q}\varphi_j\overline{\varphi_k}(x)\chi((b_1-b_2)y+(c_1-c_2)xy)\\
&=&\sum_{x\in \Ff_q^*, y\in \Ff_q}\varphi(x)\chi(by+cxy) \ \ (\hbox{where $\varphi=\varphi_j\overline{\varphi_k}$, $b=b_1-b_2$ and $c=c_1-c_2$})\\
&=&\sum_{x\in \Ff_q^*}\varphi(x)\sum_{y\in \Ff_q}\chi((b+cx)y)\\
&=&\sum_{x\in \Ff_q^*,b+cx=0}\varphi(x)q,
\end{eqnarray*}
The last equation holds by the orthogonal relation of additive characters.

When $b\neq 0$ and $c\neq 0$, we have
$$K_2\cF\cF'^H=q\varphi(-\frac{b}{c}).$$

When $b\neq 0$, $c=0$, or $b=0, c\neq 0$, it is easy to see $\cF\cF'^H=0$.

When $b=0$ and $c=0$, since $\cF\neq \cF'$, $\varphi$ is nontrivial. We have
$$K_2\cF\cF'^H=\sum_{x\in \Ff_q^*}\varphi(x)q=0,$$
by the orthogonal relation of multiplicative characters.

Therefore, we have $$I_{max}(\cc_2)=max\{|\cF\cF'^H|:\cF,\cF'\in \cc, and\ \cF\neq\cF'\}=\frac{q}{K_2}=\frac{1}{q-1} .$$
\end{proof}

Using Theorem \ref{th21}, we can derive the ratio of $I_{max}(\cc_2)$ of the proposed codebooks to that of the MWBE codebooks and show the near-optimality of the proposed codebooks as in the following theorem.

\begin{thm}\label{th22}
Let $I_{W2}$ be the Welch bound equality, for the given $N_2$, $K_2$ in the current section. We have

$$\lim_{r\rightarrow\infty}\frac{I_{max}(\cc_2)}{I_{W2}}=1,$$
then the codebooks we proposed are near optimal.

\end{thm}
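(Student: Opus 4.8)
The plan is to follow exactly the same two-step recipe used in the proof of Theorem \ref{th02}, since by Theorem \ref{th21} the substantive work—evaluating the character sums to obtain $I_{max}(\cc_2)=\frac{1}{q-1}$—is already finished. What remains is purely computational: insert the parameters $N_2=(q-1)q^2$ and $K_2=(q-1)q$ into the Welch bound formula, form the ratio with $I_{max}(\cc_2)$, and let $q\to\infty$ (the index $r$ in the statement should read $q$, matching Theorem \ref{th02}).

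First I would substitute into the definition of the Welch bound. The numerator factors cleanly, $N_2-K_2=(q-1)q^2-(q-1)q=(q-1)^2q$, and since $K_2=(q-1)q$ is a factor of this expression, cancelling $(q-1)q$ against the denominator yields
\[
I_{W2}=\sqrt{\frac{N_2-K_2}{(N_2-1)K_2}}=\sqrt{\frac{(q-1)^2 q}{\bigl((q-1)q^2-1\bigr)(q-1)q}}=\sqrt{\frac{q-1}{(q-1)q^2-1}}.
\]
Next I would form the ratio using the value $I_{max}(\cc_2)=\frac{1}{q-1}$ from Theorem \ref{th21}:
\[
\frac{I_{max}(\cc_2)}{I_{W2}}=\frac{1}{q-1}\sqrt{\frac{(q-1)q^2-1}{q-1}}=\sqrt{\frac{(q-1)q^2-1}{(q-1)^3}}.
\]
Both the numerator $(q-1)q^2-1$ and the denominator $(q-1)^3$ are monic cubics in $q$, so the quantity under the radical tends to $1$; hence the ratio tends to $1$ as $q\to\infty$, which establishes the asymptotic optimality of $\cc_2$.

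There is no genuine obstacle in this argument: the only point requiring mild care is the factorization and cancellation of $(q-1)q$ in the first display, after which the limit is immediate from comparing leading coefficients. I would also remark, as the authors do after Theorem \ref{th02}, that one could tabulate $N_2$, $K_2$, $I_{max}(\cc_2)$ and $I_{W2}$ for small $q$ to illustrate the convergence numerically, though this is not needed for the proof.
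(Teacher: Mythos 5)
Your proposal is correct and follows essentially the same route as the paper's own proof: substitute $N_2=(q-1)q^2$, $K_2=(q-1)q$ into the Welch bound, form the ratio with $I_{max}(\cc_2)=\frac{1}{q-1}$, and compare leading terms as $q\to\infty$. Your expressions are algebraically identical to the paper's (e.g.\ $(q-1)q^2-1=q^3-q^2-1$), merely kept in factored form, so there is nothing to add.
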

\begin{proof}
Note that $N_2=(q-1)q^2$ and $K_2=(q-1)q$. Then the corresponding Welch bound is
$$
I_{W2}=\sqrt{\frac{N_2-K_2}{(N_2-1)K_2}}=\sqrt{\frac{(q-1)q^2-(q-1)q}{((q-1)q^2-1)(q-1)q}}=\sqrt{\frac{q-1}{q^3-q^2-1}},
$$
we have
$$\frac{I_{max}(\cc_2)}{I_{W2}}=\sqrt{\frac{q^3-q^2-1}{(q-1)^3}}.$$

It is obvious that $\lim_{q\rightarrow+\infty }\frac{I_{max}(\cc_2)}{I_{W2}}=1$. The codebook $\cc_2$ asymptotically meets the Welch bound. This completes the proof.
\end{proof}

In Table \ref{table3}, we provide some explicit values of the parameters of the codebooks we proposed for some given $q$, and corresponding numerical data of the Welch bound for comparison. The numerical results show  that the codebooks $\cc_2$ asymptotically meet the Welch bound.

 \begin{table}[!htbp]
 \newcommand{\tabincell}[2]{\begin{tabular}{@{}#1@{}}#2\end{tabular}}
 \caption{\footnotesize Parameters of the $(N_2,K_2)$ codebook of Section IV}
 \label{table3}
  \centering
 \setlength{\tabcolsep}{1mm}{
 \begin{tabular}{|c|c|c|c|c|c|}
   \hline
  $q$ & $N_2$ & $K_2$ & $I_{max}(\cc_2)$ & $I_{W2}$ & $\frac{I_{max}(\cc_2)}{I_{W2}}$ \\ \hline
 $3$& $18$& $6$& $0.5000$& $0.3430$& $1.4577$ \\ \hline
 $5$& $100$& $20$& $0.2500$& $0.2010$& $1.2437$ \\ \hline
 $13$& $2028$& $156$& $0.0833$& $0.0769$& $1.0831$ \\ \hline
 $49$& $115248$& $2352$& $0.0208$& $0.0204$& $1.0208$ \\ \hline
 $5^{3}$& $1937500$& $15500$& $0.0081$& $0.0080$& $1.0081$ \\ \hline
 $5^{4}$& $243750000$& $390000$& $0.0016$& $0.0016$& $1.0016$ \\ \hline
 $7^{4}$& $1.3836e+10$& $5762400$& $4.1667e-04$& $4.1649e-04$& $1.0004$ \\ \hline
 \end{tabular}}
 \end{table}

\subsection{The third construction of codebooks}
Let

$$
D_3:=\{(x,y,z)\in \Ff_{q}^*\times\Ff_q^*\times\Ff_q^*: z=xy\}.
$$
Then $\#D_3=(q-1)^2$.

The codebook $\cc_3$ is constructed as
$$\cc_3:=\{\frac{1}{q-1}(\chi_a(x)\chi_b(y)\varphi(z))_{(x,y,z)\in D_3}:a,b\in \Ff_q, \varphi\in \widehat{\Ff_q^*}\}.$$
We can derive the following Theorem.

\begin{thm}\label{th31}
 $\cc_3$ is a codebook with $N_3=q^2(q-1)$, $K_3=(q-1)^2$ and $I_{max}(\cc_3)=\frac{q}{(q-1)^2}$.
\end{thm}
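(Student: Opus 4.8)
\textbf{Proof plan for Theorem~\ref{th31}.}
The plan is to follow exactly the same template as the proofs of Theorems~\ref{th01} and~\ref{th21}: count codewords to get $N_3$ and $K_3$, then reduce the inner product of two distinct codewords to a character sum over $\Ff_q^*\times\Ff_q^*$, and finally evaluate that sum by splitting into cases and invoking the orthogonality relations together with the Gauss-sum bounds from Lemma~\ref{gauss}. The parameter count is immediate: the codewords are indexed by triples $(a,b,\varphi)$ with $a,b\in\Ff_q$ and $\varphi\in\widehat{\Ff_q^*}$, so $N_3=q\cdot q\cdot(q-1)=q^2(q-1)$, while each codeword has length $\#D_3=(q-1)^2$, giving $K_3=(q-1)^2$.

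\textbf{Reduction to a character sum.}
First I would take two distinct codewords $\cF=\frac1{q-1}(\chi_{a_1}(x)\chi_{b_1}(y)\varphi_j(z))_{(x,y,z)\in D_3}$ and $\cF'=\frac1{q-1}(\chi_{a_2}(x)\chi_{b_2}(y)\varphi_k(z))_{(x,y,z)\in D_3}$, and compute $K_3\cF\cF'^H$. Using $z=xy$ and writing $a=a_1-a_2$, $b=b_1-b_2$, $\varphi=\varphi_j\overline{\varphi_k}$, the product collapses to
\begin{equation*}
K_3\cF\cF'^H=\sum_{x\in\Ff_q^*,\,y\in\Ff_q^*}\chi(ax+by)\varphi(xy).
\end{equation*}
Because $\varphi(xy)=\varphi(x)\varphi(y)$, this factors as
\begin{equation*}
K_3\cF\cF'^H=\Bigl(\sum_{x\in\Ff_q^*}\varphi(x)\chi(ax)\Bigr)\Bigl(\sum_{y\in\Ff_q^*}\varphi(y)\chi(by)\Bigr),
\end{equation*}
and each factor is a Gauss-type sum $G(\varphi,\chi_a)$, respectively $G(\varphi,\chi_b)$ (with the convention $\chi_a=\chi_0$ when $a=0$). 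The key identity to invoke is Lemma~\ref{gauss1}, which gives $G(\varphi,\chi_a)=\overline{\varphi}(a)G(\varphi,\chi_1)$ for $a\neq0$, so that each nonzero argument contributes a factor of modulus $\sqrt{q}$ when $\varphi$ is nontrivial.

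\textbf{Case analysis and the extremal value.}
The magnitude $|K_3\cF\cF'^H|$ is the product of the two factor magnitudes, so I would enumerate the cases by which of $a,b$ vanish and whether $\varphi$ is trivial, reading off each factor from Lemma~\ref{gauss}. When $\varphi\neq\varphi_0$: if both $a,b\neq0$ the product has modulus $\sqrt{q}\cdot\sqrt{q}=q$; if exactly one of $a,b$ is zero one factor is $0$ (a nontrivial $\varphi$ against the trivial additive character), killing the product; if $a=b=0$ both factors are $0$. When $\varphi=\varphi_0$: any nonzero argument gives a factor $-1$ and a zero argument gives the factor $q-1$, but this case forces $(a,b)\neq(0,0)$ since the codewords are distinct, so the modulus is at most $q-1$ (attained when both $a,b\neq0$, giving $(-1)(-1)=1$, or one zero giving $q-1$). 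Comparing across all cases, the maximum of $|K_3\cF\cF'^H|$ is $q$, achieved in the $\varphi$-nontrivial, $a,b$-both-nonzero case. Dividing by $K_3=(q-1)^2$ yields $I_{max}(\cc_3)=\frac{q}{(q-1)^2}$.

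\textbf{Anticipated obstacle.}
The only real subtlety, and the step I would write most carefully, is the bookkeeping in the $\varphi=\varphi_0$ branch: here each factor is an ordinary additive-character sum over $\Ff_q^*$ equal to $-1$ (if the coefficient is nonzero) or $q-1$ (if zero), and one must check that the largest modulus this branch can produce, namely $q-1$, does not exceed the value $q$ coming from the nontrivial-$\varphi$ branch. Since $q-1<q$, the maximum is unambiguously $q$, but this comparison must be stated explicitly to justify that the extremal correlation is realized in the Gauss-sum case rather than in the degenerate additive case. Everything else is a routine application of the orthogonality relations and the absolute-value assertions of Lemma~\ref{gauss}.
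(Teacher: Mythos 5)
Your proposal is correct and follows essentially the same route as the paper's own proof: the same factorization of $K_3\cF\cF'^H$ into $G(\varphi,\chi_a)G(\varphi,\chi_b)$, the same case split on whether $\varphi=\varphi_j\overline{\varphi_k}$ is trivial and which of $a,b$ vanish, and the same use of Lemmas~\ref{gauss} and~\ref{gauss1} to read off the moduli. Your explicit comparison of the trivial-$\varphi$ branch (bounded by $q-1$) against the Gauss-sum branch (giving $q$) is a point the paper leaves implicit, but it is the same argument.
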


\begin{proof}
By the definition of $\cc_3$, it contains $q^2(q-1)$ codewords of length $(q-1)^2$. Then it is easy to see $N_3=q^2(q-1)$ and $K_3=(q-1)^2$. Let $\cF$ and $\cF'$ be any different codewords in $\cc_3$, $\cF=\frac{1}{q-1}(\chi_{a_1}(x)\chi_{b_1}(y)\varphi_j(z))_{(x,y,z)\in D_3}$ and $\cF'=\frac{1}{q-1}(\chi_{a_2}(x)\chi_{b_2}(y)\varphi_k(z))_{(x,y,z)\in D_3}$, where $b_1,b_2,c_1,c_2\in \Ff_q, \varphi_j,\varphi_k\in \widehat{\Ff_q^*}$. Then the correlation of $\cF$ and $\cF'$ is as follows.

\begin{eqnarray*}
&&K_3\cF\cF'^H\\
&=&\sum_{(x,y,z)\in D_3}\chi_{a_1}(x)\chi_{b_1}(y)\varphi_j(z)\overline{\chi_{a_2}(x)\chi_{b_2}(y)\varphi_k(z)}\\
&=&\sum_{x,y\in \Ff_q^*}\chi((a_1-a_2)x+(b_1-b_2)y)\varphi_j\overline{\varphi_k}(xy)\\
&=&\sum_{x,y\in \Ff_q^*}\chi(ax+by)\varphi(xy) \ \ (\hbox{where $a=a_1-a_2$, $b=b_1-b_2$, and $\varphi=\varphi_j\overline{\varphi_k}$})\\
&=& \sum_{x\in \Ff_q^*}\chi_a(x)\varphi(x)\sum_{y\in \Ff_q^*}\chi_b(y)\varphi(y)\\
&=&G(\varphi,\chi_a)G(\varphi,\chi_b).
\end{eqnarray*}

When $\varphi$ is trivial, since $\cF\neq\cF'$,  we get $a\neq 0$ or $b\neq 0$,  by Lemma \ref{gauss}, we have
$$
 K_3\cF\cF'^H=G(\varphi,\chi_a)G(\varphi,\chi_b)=\left\{
            \begin{array}{ll}
              -(q-1),& \hbox{if\ $a=0,b\neq 0$\ or $b=0,a\neq 0$,}\\
              (-1)(-1),& \hbox{if\ $a,b\in \Ff_q^*$},
            \end{array}
          \right.
$$

When $\varphi$ is nontrivial, by Lemma \ref{gauss} and \ref{gauss1}, we have

$$
 K_3\cF\cF'^H=G(\varphi,\chi_a)G(\varphi,\chi_b)=\left\{
            \begin{array}{ll}
              0,& \hbox{if\ $a=0$\ or $b=0$,}\\
              \overline{\varphi}(ab)g^2(\varphi),& \hbox{if\ $a,b\in \Ff_q^*$}.
            \end{array}
          \right.
$$

Therefore, we have $$I_{max}(\cc_3)=max\{|\cF\cF'^H|:\cF,\cF'\in \cc, and\ \cF\neq\cF'\}=\frac{q}{K_3}=\frac{q}{(q-1)^2}.$$
\end{proof}

Using Theorem \ref{th31}, we can derive the ratio of $I_{max}(\cc_3)$ of the proposed codebooks to that of the MWBE codebooks and show the near-optimality of the proposed codebooks as in the following theorem.

\begin{thm}\label{th32}
Let $I_{W3}$ be the Welch bound equality, for the given $N_3$, $K_3$ in the current section. We have

$$\lim_{r\rightarrow\infty}\frac{I_{max}(\cc_3)}{I_{W3}}=1,$$
then the codebooks we proposed are near optimal.

\end{thm}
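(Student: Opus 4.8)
The plan is to imitate the proofs of Theorems~\ref{th02} and~\ref{th22} verbatim in structure, since Theorem~\ref{th32} is of exactly the same flavour. The only inputs needed are the parameters $N_3=q^2(q-1)$ and $K_3=(q-1)^2$ established in Theorem~\ref{th31}, together with the value $I_{max}(\cc_3)=\frac{q}{(q-1)^2}$ computed there. Everything else is polynomial algebra followed by a leading-order limit.

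First I would substitute the parameters into the Welch bound. Here $N_3-K_3=q^3-2q^2+2q-1$, and observing that $q=1$ is a root one factors this as $(q-1)(q^2-q+1)$; also $N_3-1=q^3-q^2-1$. Hence
$$I_{W3}=\sqrt{\frac{N_3-K_3}{(N_3-1)K_3}}=\sqrt{\frac{(q-1)(q^2-q+1)}{(q^3-q^2-1)(q-1)^2}}=\sqrt{\frac{q^2-q+1}{(q-1)(q^3-q^2-1)}}.$$

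Next I would form the ratio and square it to clear the radical, obtaining
$$\left(\frac{I_{max}(\cc_3)}{I_{W3}}\right)^2=\frac{q^2}{(q-1)^4}\cdot\frac{(q-1)(q^3-q^2-1)}{q^2-q+1}=\frac{q^2(q^3-q^2-1)}{(q-1)^3(q^2-q+1)}.$$
Both the numerator $q^5-q^4-q^2$ and the denominator $(q-1)^3(q^2-q+1)$ are degree-$5$ polynomials in $q$ with leading coefficient $1$, so this quotient tends to $1$ as $q\to\infty$; taking square roots gives $\lim_{q\to\infty}\frac{I_{max}(\cc_3)}{I_{W3}}=1$, which is the assertion, and shows $\cc_3$ is near optimal.

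There is essentially no obstacle in this argument: it reduces entirely to substitution, one factorization, and a comparison of leading terms. The only point requiring mild care is the factorization of $q^3-2q^2+2q-1$; even if one does not spot it, comparing the degree-$5$ leading coefficients of numerator and denominator of the squared ratio directly yields the limit, so the conclusion is robust. I would also note in passing that the limit variable in the statement is written as $r$ but should be read as $q$, consistent with the analogous Theorems~\ref{th02} and~\ref{th22}.
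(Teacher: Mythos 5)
Your proposal is correct and follows essentially the same route as the paper: substitute $N_3=q^2(q-1)$, $K_3=(q-1)^2$, $I_{max}(\cc_3)=\frac{q}{(q-1)^2}$ into the Welch bound, simplify to $I_{W3}=\sqrt{\frac{q^2-q+1}{(q-1)(q^3-q^2-1)}}$ (the paper arrives at the same expression by factoring $(q-1)$ directly from $N_3-K_3$), and conclude by comparing leading terms of the ratio as $q\to\infty$. Squaring the ratio before taking the limit, and flagging the $r$-versus-$q$ typo, are only cosmetic differences from the paper's argument.
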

\begin{proof}
Note that $N_3=(q-1)q^2$ and $K_3=(q-1)^2$. Then the corresponding Welch bound is
$$
I_{W3}=\sqrt{\frac{N_3-K_3}{(N_3-1)K_3}}=\sqrt{\frac{(q-1)q^2-(q-1)^2}{((q-1)q^2-1)(q-1)^2}}=\sqrt{\frac{q^2-q+1}{(q^3-q^2-1)(q-1)}},
$$
we have
$$\frac{I_{max}(\cc_3)}{I_{W3}}=\frac{q}{q-1}\sqrt{\frac{q^3-q^2-1}{(q^2-q+1)(q-1)}}.$$

It is obvious that $\lim_{q\rightarrow+\infty }\frac{I_{max}(\cc_3)}{I_{W3}}=1$. The codebook $\cc_3$ asymptotically meets the Welch bound. This completes the proof.
\end{proof}

In Table \ref{table4}, we provide some explicit values of the parameters of the codebooks we proposed for some given $q$, and corresponding numerical data of the Welch bound for comparison. The numerical results show  that the codebooks $\cc_3$ asymptotically meet the Welch bound.

\begin{table}[!htbp]
 \newcommand{\tabincell}[2]{\begin{tabular}{@{}#1@{}}#2\end{tabular}}
 \caption{\footnotesize Parameters of the $(N_3,K_3)$ codebook of Section IV}
 \label{table4}
  \centering
 \setlength{\tabcolsep}{1mm}{
 \begin{tabular}{|c|c|c|c|c|c|}
   \hline
  $q$ & $N_3$ & $K_3$ & $I_{max}(\cc_3)$ & $I_{W3}$ & $\frac{I_{max}(\cc_3)}{I_{W3}}$ \\ \hline
 $3$& $18$& $4$& $0.7500$& $0.4537$& $1.6529$ \\ \hline
 $5$& $100$& $16$& $0.3125$& $0.2303$& $1.3570$ \\ \hline
 $13$& $2028$& $144$& $0.0903$& $0.0803$& $1.1237$ \\ \hline
 $49$& $115248$& $2304$& $0.0213$& $0.0206$& $1.0312$ \\ \hline
 $5^{3}$& $1937500$& $15376$& $0.0081$& $0.0080$& $1.0121$ \\ \hline
 $5^{4}$& $243750000$& $389376$& $0.0016$& $0.0016$& $1.0024$ \\ \hline
 $7^{4}$& $1.3830e+10$& $5760000$& $4.1684e-04$& $4.1658e-04$& $1.0006$ \\ \hline
 \end{tabular}}
 \end{table}

\subsection{The fourth construction of codebooks}
Let

$$
D_4:=\{(x,y,z)\in \Ff_{q}^*\times\Ff_q^*\times\Ff_q^*: z=xy\}.
$$
Then $\#D_4=(q-1)^2$.

The codebook $\cc_4$ is constructed as
$$\cc_4:=\{\frac{1}{q-1}(\varphi_i(x)\varphi_j(y)\chi_c(z))_{(x,y,z)\in D_4}:\varphi_i,\varphi_j\in \widehat{\Ff_q^*},\ c\in \Ff_q \}.$$
We can derive the following Theorem.

\begin{thm}\label{th41}
 $\cc_4$ is a codebook with $N_4=(q-1)^2q$, $K_4=(q-1)^2$ and $I_{max}(\cc_4)=\frac{q}{(q-1)^2}$.
\end{thm}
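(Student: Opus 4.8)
The plan is to follow the template of Theorems~\ref{th01}, \ref{th21} and \ref{th31}: fix the two parameters by counting, reduce the Hermitian inner product of two distinct codewords to a double character sum over $\Ff_q^*\times\Ff_q^*$, evaluate it with the orthogonality relations together with Lemma~\ref{gauss} and Lemma~\ref{gauss1}, and read off the largest modulus. The counts are immediate: a codeword is indexed by a triple $(\varphi_i,\varphi_j,c)\in\widehat{\Ff_q^*}\times\widehat{\Ff_q^*}\times\Ff_q$, so $N_4=(q-1)^2q$, while $K_4=\#D_4=(q-1)^2$ because $(x,y)\in\Ff_q^*\times\Ff_q^*$ already determines $z=xy$.

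First I would take distinct codewords $\cF,\cF'$ with indices $(\varphi_{i_1},\varphi_{j_1},c_1)$ and $(\varphi_{i_2},\varphi_{j_2},c_2)$, put $\psi=\varphi_{i_1}\overline{\varphi_{i_2}}$, $\theta=\varphi_{j_1}\overline{\varphi_{j_2}}$, $c=c_1-c_2$, and use $z=xy$ to obtain
\[
K_4\,\cF\cF'^H=\sum_{x,y\in\Ff_q^*}\psi(x)\theta(y)\chi_c(xy),
\]
where $(\psi,\theta,c)\neq(\varphi_0,\varphi_0,0)$. The structural feature that separates this construction from the third is that the additive character now sits on the product $xy$ and so does \emph{not} split as a product over $x$ and $y$. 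I would undo this coupling by the substitution $t=xy$ (equivalently, sum over $y$ first and apply Lemma~\ref{gauss1} to $G(\theta,\chi_{cx})=\overline{\theta}(cx)g(\theta)$), which gives
\[
K_4\,\cF\cF'^H=\left(\sum_{x\in\Ff_q^*}(\psi\overline{\theta})(x)\right)G(\theta,\chi_c).
\]
By multiplicative orthogonality the first factor is $q-1$ when $\psi=\theta$ and $0$ otherwise, so only the diagonal $\psi=\theta$ contributes and everything reduces to a single Gauss sum $G(\theta,\chi_c)$.

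The crux is the resulting case analysis through Lemma~\ref{gauss}. If $c=0$ then $G(\theta,\chi_0)=0$ for $\theta\neq\varphi_0$, and the subcase $\theta=\varphi_0,\,c=0$ is excluded since it forces $\cF=\cF'$; if $c\neq0$ and $\theta=\varphi_0$ then $G(\varphi_0,\chi_c)=-1$, giving $|\cF\cF'^H|=\tfrac1{q-1}$. The hard part, and what I expect to be the main obstacle, is the case $c\neq0$ with $\theta$ nontrivial: there $|G(\theta,\chi_c)|=\sqrt q$, so $|\cF\cF'^H|=(q-1)\sqrt q/(q-1)^2=\sqrt q/(q-1)$. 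Since $\sqrt q/(q-1)>q/(q-1)^2$ for every prime power $q\ge3$ (already at $q=3$ the pair $(\varphi_1,\varphi_1,1)$ against $(\varphi_0,\varphi_0,0)$ yields modulus $\sqrt3/2\approx0.866$, whereas $q/(q-1)^2=0.75$), this diagonal term—not the value $q/(q-1)^2$ in the statement—appears to control $I_{max}(\cc_4)$, so that $I_{max}(\cc_4)=\sqrt q/(q-1)$. Settling this discrepancy is the step I would have to resolve before the proof could be finished; I note that with $I_{max}=\sqrt q/(q-1)$ the ratio to the Welch bound $\sqrt{(q-1)/((q-1)^2q-1)}$ grows like $\sqrt q$ rather than tending to $1$, so the asymptotic optimality asserted for $\cc_4$ would also need to be re-examined.
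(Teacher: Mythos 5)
Your computation is correct, and the discrepancy you flagged is real: as stated, Theorem~\ref{th41} is false, and it is the paper's own proof that contains the error, not your argument. After reducing the correlation to $\sum_{x,y\in\Ff_q^*}\varphi(x)\varphi'(y)\chi(cxy)$, the paper factors this as $\bigl(\sum_{x\in\Ff_q^*}\varphi(x)\chi_c(x)\bigr)\bigl(\sum_{y\in\Ff_q^*}\varphi'(y)\chi_c(y)\bigr)=G(\varphi,\chi_c)G(\varphi',\chi_c)$, which would require $\chi(cxy)=\chi(cx)\chi(cy)=\chi(c(x+y))$; this is false, since an additive character does not convert products into products. Your evaluation is the right one: summing over $y$ first and applying Lemma~\ref{gauss1} in the form $G(\varphi',\chi_{cx})=\overline{\varphi'}(x)G(\varphi',\chi_c)$, multiplicative orthogonality collapses the double sum to $(q-1)G(\varphi,\chi_c)$ on the diagonal $\varphi=\varphi'$ and to $0$ off it. Taking $\varphi=\varphi'$ nontrivial and $c\neq 0$ then gives, by Lemma~\ref{gauss}, a correlation of modulus $(q-1)\sqrt{q}$, hence $I_{max}(\cc_4)=\sqrt{q}/(q-1)$, which strictly exceeds the claimed $q/(q-1)^2$ for every $q\geq 3$; your numerical check at $q=3$ ($\sqrt{3}/2$ versus $3/4$) is correct. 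Consequently Theorem~\ref{th42} also fails for this construction: the ratio $I_{max}(\cc_4)/I_{W4}$ grows like $\sqrt{q}$, so $\cc_4$ as defined is not asymptotically optimal.

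Two smaller remarks. First, your handling of the case $c=0$ has a slip: $\theta=\varphi_0$ together with $c=0$ does not by itself force $\cF=\cF'$, since one may still have $\psi\neq\varphi_0$; but in that subcase the orthogonality factor $\sum_{x\in\Ff_q^*}(\psi\overline{\theta})(x)=\sum_{x\in\Ff_q^*}\psi(x)$ vanishes, so your conclusion that all $c=0$ correlations are zero stands. Second, the paper's case analysis would be valid verbatim if the defining relation of $D_4$ were additive rather than multiplicative, i.e. $D_4=\{(x,y,z)\in\Ff_q^*\times\Ff_q^*\times\Ff_q:\ z=x+y\}$, since then $\chi_c(z)=\chi_c(x)\chi_c(y)$ and the correlation genuinely equals $G(\varphi,\chi_c)G(\varphi',\chi_c)$, whose maximal modulus is $q$; some repair of this kind is presumably what was intended, but it is not what the theorem and the construction of $\cc_4$ actually say.
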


\begin{proof}
By the definition of $\cc_4$, it contains $(q-1)^2q$ codewords of length $(q-1)^2$. Then it is easy to see $N_4=(q-1)^2q$ and $K_4=(q-1)^2$. Let $\cF$ and $\cF'$ be any different codewords in $\cc_4$, $\cF=\frac{1}{q-1}(\varphi_s(x)\varphi_t(y)\chi_{c_1}(z))_{(x,y,z)\in D_4}$ and $\cF'=\frac{1}{q-1}(\varphi_s'(x)\varphi_t'(y)\chi_{c_2}(z))_{(x,y,z)\in D_4}$, where $\varphi_s,\varphi_t,\varphi_s',\varphi_t'\in \widehat{\Ff_q^*},c_1,c_2\in \Ff_q$. Then the correlation of $\cF$ and $\cF'$ is as follows.

\begin{eqnarray*}
&&K_4\cF\cF'^H\\
&=&\sum_{(x,y,z)\in D_4}\varphi_s(x)\varphi_t(y)\chi_{c_1}(z)\overline{\varphi_s'(x)\varphi_t'(y)\chi_{c_2}(z)}\\
&=&\sum_{(x,y,z)\in D_4}\varphi_s\overline{\varphi_s'}(x)\varphi_t\overline{\varphi_t'}(y)\chi((c_1-c_2)z)\\
&=&\sum_{x,y\in \Ff_q^*}\varphi(x)\varphi'(y)\chi(cxy) \ \ (\hbox{where $\varphi=\varphi_s\overline{\varphi_s'}$, $\varphi'=\varphi_t\overline{\varphi_t'}$, and $c=c_1-c_2$})\\
&=& \sum_{x\in \Ff_q^*}\varphi(x)\chi_c(x)\sum_{y\in \Ff_q^*}\varphi'(y)\chi_c(y)\\
&=&G(\varphi,\chi_c)G(\varphi',\chi_c).
\end{eqnarray*}

When $c=0$, since $\cF\neq\cF'$, at least one of $\varphi$ and $\varphi'$ is nontrivial, by Lemma \ref{gauss}, we have
$$K_4\cF\cF'^H=G(\varphi,\chi_c)G(\varphi',\chi_c)=0.$$

When $c\neq0$, by Lemma \ref{gauss} and \ref{gauss1}, we have

$$
 K_4\cF\cF'^H=G(\varphi,\chi_c)G(\varphi',\chi_c)=\left\{
            \begin{array}{ll}
              (-1)(-1), & \hbox{both $\varphi$ and $\varphi'$ are trivial,} \\
              (-1)\overline{\varphi'(c)}g(\varphi'),& \hbox{$\varphi$ is trivial, and $\varphi'$ is nontrivial,}\\
              (-1)\overline{\varphi(c)}g(\varphi),& \hbox{$\varphi$ is nontrivial, and $\varphi'$ is trivial,}\\
              \overline{\varphi\varphi'}(c)g(\varphi)g(\varphi'),& \hbox{both $\varphi$ and  $\varphi'$ are nontrivial.}
            \end{array}
          \right.
$$

Therefore, we have $$I_{max}(\cc_4)=max\{|\cF\cF'^H|:\cF,\cF'\in \cc_4, and\ \cF\neq\cF'\}=\frac{q}{K_4}=\frac{q}{(q-1)^2}.$$
\end{proof}

Using Theorem \ref{th41}, we can derive the ratio of $I_{max}(\cc_4)$ of the proposed codebooks to that of the MWBE codebooks and show the near-optimality of the proposed codebooks as in the following theorem.

\begin{thm}\label{th42}
Let $I_{W4}$ be the Welch bound equality, for the given $N_4$, $K_4$ in the current section. We have

$$\lim_{r\rightarrow\infty}\frac{I_{max}(\cc_4)}{I_{W4}}=1,$$
then the codebooks we proposed are near optimal.

\end{thm}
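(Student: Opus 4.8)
The plan is to mirror exactly the computation already carried out in Theorems \ref{th02}, \ref{th22}, and \ref{th32}, since Theorem \ref{th42} asserts the analogous asymptotic statement for the fourth construction. The key inputs are the values $N_4=(q-1)^2q$ and $K_4=(q-1)^2$ from Theorem \ref{th41}, together with $I_{max}(\cc_4)=\frac{q}{(q-1)^2}$ established there. First I would substitute these into the Welch bound formula
$$
I_{W4}=\sqrt{\frac{N_4-K_4}{(N_4-1)K_4}}=\sqrt{\frac{(q-1)^2q-(q-1)^2}{((q-1)^2q-1)(q-1)^2}}.
$$
Simplifying the numerator as $(q-1)^2(q-1)=(q-1)^3$, this collapses to
$$
I_{W4}=\sqrt{\frac{(q-1)^3}{((q-1)^2q-1)(q-1)^2}}=\sqrt{\frac{q-1}{(q-1)^2q-1}}.
$$

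Next I would form the ratio $\frac{I_{max}(\cc_4)}{I_{W4}}$ by dividing $\frac{q}{(q-1)^2}$ by the expression just obtained. Writing the ratio as a single surd,
$$
\frac{I_{max}(\cc_4)}{I_{W4}}=\frac{q}{(q-1)^2}\sqrt{\frac{(q-1)^2q-1}{q-1}}.
$$
Expanding $(q-1)^2q-1=q^3-2q^2+q-1$, the quantity under the root behaves like $q^3/q=q^2$ for large $q$, so the prefactor $\frac{q}{(q-1)^2}$ multiplied by $\sqrt{q^2}=q$ gives $\frac{q^2}{(q-1)^2}\to 1$. I would make this precise by noting that every term is a rational function of $q$ whose leading behavior is dominated by the top-degree monomials.

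Finally I would conclude by taking the limit as $q\to\infty$ (the variable in the paper's limits is written $r$, but the underlying parameter is $q$, as in the companion proofs), obtaining $\lim_{q\to\infty}\frac{I_{max}(\cc_4)}{I_{W4}}=1$, which is precisely the assertion that $\cc_4$ asymptotically meets the Welch bound. There is no genuine obstacle here: the entire argument is an elementary algebraic simplification followed by a limit of a ratio of polynomials, identical in structure to the three preceding near-optimality theorems. The only point requiring mild care is the correct factorization of the numerator $N_4-K_4=(q-1)^3$ and the resulting cancellation, after which the limit is immediate.
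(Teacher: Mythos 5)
Your proposal is correct and follows essentially the same route as the paper's own proof: substitute $N_4=(q-1)^2q$, $K_4=(q-1)^2$ into the Welch bound, simplify to $I_{W4}=\sqrt{\frac{q-1}{(q-1)^2q-1}}$, form the ratio with $I_{max}(\cc_4)=\frac{q}{(q-1)^2}$, and take the limit of a ratio of rational functions (your expression $\frac{q}{(q-1)^2}\sqrt{\frac{(q-1)^2q-1}{q-1}}$ is algebraically identical to the paper's $\frac{q}{q-1}\sqrt{\frac{(q-1)^2q-1}{(q-1)^3}}$). You also correctly identified that the $r$ in the theorem statement is a typo for the parameter $q$.
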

\begin{proof}
Note that $N_4=(q-1)^2 q$ and $K_4=(q-1)^2$. Then the corresponding Welch bound is
$$
I_{W4}=\sqrt{\frac{N_4-K_4}{(N_4-1)K_4}}=\sqrt{\frac{(q-1)^2 q-(q-1)^2}{((q-1)^2 q-1)(q-1)^2}}=\sqrt{\frac{q-1}{(q-1)^2q-1}},
$$
we have
$$\frac{I_{max}(\cc_4)}{I_{W4}}=\frac{q}{q-1}\sqrt{\frac{(q-1)^2q-1}{(q-1)^3}}.$$

It is obvious that $\lim_{q\rightarrow+\infty }\frac{I_{max}(\cc_4)}{I_{W4}}=1$. The codebook $\cc_4$ asymptotically meets the Welch bound. This completes the proof.
\end{proof}

In Table \ref{table5}, we provide some explicit values of the parameters of the codebooks we proposed for some given $q$, and corresponding numerical data of the Welch bound for comparison. The numerical results show  that the codebooks $\cc_4$ asymptotically meet the Welch bound.
\begin{table}[!htbp]
 \newcommand{\tabincell}[2]{\begin{tabular}{@{}#1@{}}#2\end{tabular}}
 \caption{\footnotesize Parameters of the $(N_4,K_4)$ codebook of Section IV}
 \label{table5}
 \centering
 \setlength{\tabcolsep}{1mm}{
 \begin{tabular}{|c|c|c|c|c|c|}
   \hline
  $q$ & $N_4$ & $K_4$ & $I_{max}(\cc_4)$ & $I_{W4}$ & $\frac{I_{max}(\cc_4)}{I_{W4}}$ \\ \hline
 $3$& $12$& $4$& $0.7500$& $0.4264$& $1.7589$ \\ \hline
 $5$& $80$& $16$& $0.3125$& $0.2250$& $1.3888$ \\ \hline
 $13$& $1872$& $144$& $0.0903$& $0.0801$& $1.1273$ \\ \hline
 $49$& $112896$& $2304$& $0.0213$& $0.0206$& $1.0314$ \\ \hline
 $5^{3}$& $1922000$& $15376$& $0.0081$& $0.0080$& $1.0121$ \\ \hline
 $5^{4}$& $243360000$& $389376$& $0.0016$& $0.0016$& $1.0024$ \\ \hline
 $7^{4}$& $1.3830e+10$& $5760000$& $4.1684e-04$& $4.1658e-04$& $1.0006$ \\ \hline

 \end{tabular}}
 \end{table}

\subsection{The fifth construction of codebooks}
Let

$$
D_5:=\{(x,y,z)\in \Ff_{q}^*\times\Ff_q^*\times\Ff_q^*: z=x(1-y)\}.
$$
Then $\#D_5=(q-1)(q-2)$.

The codebook $\cc_5$ is constructed as
$$\cc_5:=\{\frac{1}{\sqrt{(q-1)(q-2)}}(\chi_a(x)\varphi_i(y)\varphi_j(z))_{(x,y,z)\in D_5}:\varphi_i,\varphi_j\in \widehat{\Ff_q^*},\ a\in \Ff_q \}.$$
We can derive the following Theorem.

\begin{thm}\label{th51}
 $\cc_5$ is a codebook with $N_5=(q-1)^2q$, $K_5=(q-1)(q-2)$ and $I_{max}(\cc_5)=\frac{q}{(q-1)(q-2)}$.
\end{thm}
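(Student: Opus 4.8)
The plan is to follow the same template as the previous four constructions: reduce the inner product of two distinct codewords to a character sum over $\Ff_q^*\times\Ff_q^*$, and then identify that sum with a product of Gauss sums (or a Jacobi-type sum) whose modulus is controlled by Lemmas \ref{gauss}, \ref{gauss1} and \ref{jacobi}. First I would record that $N_5=(q-1)^2 q$ and $K_5=(q-1)(q-2)$ directly from the construction: there are $(q-1)^2$ choices for the pair $(\varphi_i,\varphi_j)$ and $q$ choices for $a$, while $\#D_5=(q-1)(q-2)$ because, for each $x\in\Ff_q^*$, the condition $z=x(1-y)\in\Ff_q^*$ forces $y\neq 1$, giving $q-2$ admissible $y$.

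Next I would take two distinct codewords $\cF=\frac{1}{\sqrt{K_5}}(\chi_{a_1}(x)\varphi_s(y)\varphi_t(z))$ and $\cF'=\frac{1}{\sqrt{K_5}}(\chi_{a_2}(x)\varphi_{s'}(y)\varphi_{t'}(z))$ and compute
\[
K_5\,\cF\cF'^H=\sum_{(x,y,z)\in D_5}\chi_a(x)\,\psi(y)\,\phi(z),
\]
where $a=a_1-a_2$, $\psi=\varphi_s\overline{\varphi_{s'}}$ and $\phi=\varphi_t\overline{\varphi_{t'}}$. Substituting $z=x(1-y)$ and using $\phi(x(1-y))=\phi(x)\phi(1-y)$, I would separate the $x$-summation from the $y$-summation to get
\[
K_5\,\cF\cF'^H=\Bigl(\sum_{x\in\Ff_q^*}\chi_a(x)\phi(x)\Bigr)\Bigl(\sum_{y\in\Ff_q^*,\,y\neq 1}\psi(y)\phi(1-y)\Bigr)
=G(\phi,\chi_a)\cdot S,
\]
where $S=\sum_{y}\psi(y)\phi(1-y)$ is essentially a Jacobi sum $J(\psi,\phi)$ (the extension of $\varphi$ to $0$ in the preliminaries lets me drop the $y\neq1$ restriction, and the terms $y=0$ contribute only when $\psi$ is trivial, which must be tracked). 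The main work is then the case analysis on whether $a=0$ and on which of $\psi$, $\phi$, $\psi\phi$ are trivial.

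The hard part will be the bookkeeping in this case analysis, since both factors can independently be tiny or of size $\sqrt q$, and I must check that the largest attainable modulus of $K_5\,\cF\cF'^H$ is exactly $q$ (never $q-1$ or $\sqrt q\cdot\sqrt q$-with-a-bad-sign larger) so that $I_{max}(\cc_5)=q/K_5=\frac{q}{(q-1)(q-2)}$. For $a\neq0$ the Gauss factor $G(\phi,\chi_a)$ has modulus $\sqrt q$ when $\phi$ is nontrivial (Lemma \ref{gauss}) and modulus $1$ when $\phi$ is trivial; meanwhile by Lemma \ref{jacobi} the Jacobi factor $S$ has modulus $\sqrt q$, $1$, or $q$ according to the triviality of $\psi$, $\phi$ and $\psi\phi$, with the value $q$ occurring only when both are trivial. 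Combining, the product $|G(\phi,\chi_a)\cdot S|$ reaches its maximum $q$ precisely in the configuration where $\phi$ is nontrivial with $\psi\phi$ also nontrivial (giving $\sqrt q\cdot\sqrt q$), while the case $a=0$ forces $G(\phi,\chi_0)=0$ unless $\phi$ is trivial, in which case $S=J(\psi,\varphi_0)$ collapses and never exceeds $q-1$ in modulus since $\cF\neq\cF'$ guarantees $\psi$ nontrivial. I would assemble these subcases, confirm that no configuration exceeds $q$, and conclude $I_{max}(\cc_5)=\frac{q}{(q-1)(q-2)}$.
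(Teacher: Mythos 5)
Your proposal follows essentially the same route as the paper's proof: the identical factorization of $K_5\,\cF\cF'^H$ into the Gauss sum $G(\phi,\chi_a)$ times the truncated Jacobi sum $J(\psi,\phi)-\psi(0)\phi(1)-\psi(1)\phi(0)$, followed by the same case analysis via Lemmas \ref{gauss}, \ref{gauss1} and \ref{jacobi}, with the maximum modulus $q$ reached for $a\neq 0$. The only (cosmetic) slip is in your description of the extremal configuration: you must also require $\psi$ nontrivial, not just $\phi$ and $\psi\phi$ (if $\psi$ is trivial the truncated Jacobi factor has modulus $1$, not $\sqrt{q}$), which is exactly the condition the paper records at the end of its proof.
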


\begin{proof}
By the definition of $\cc_5$, it contains $(q-1)^2q$ codewords of length $(q-1)(q-2)$. Then it is easy to see $N_5=(q-1)^2q$ and $K_5=(q-1)(q-2)$. Let $\cF$ and $\cF'$ be any different codewords in $\cc_5$, $\cF=\frac{1}{\sqrt{(q-1)(q-2)}}(\chi_{a_1}(x)\varphi_s(y)\varphi_t(z))_{(x,y,z)\in D_5}$ and $\cF'=\frac{1}{\sqrt{(q-1)(q-2)}}(\chi_{a_2}(x)\varphi_s'(y)\varphi_t'(z))_{(x,y,z)\in D_5}$, where $\varphi_s,\varphi_t,\varphi_s',\varphi_t'\in \widehat{\Ff_q^*},a_1,a_2\in \Ff_q$. Then the correlation of $\cF$ and $\cF'$ is as follows.

\begin{eqnarray*}
&&K_5\cF\cF'^H\\
&=&\sum_{(x,y,z)\in D_5}\chi_{a_1}(x)\varphi_s(y)\varphi_t(z)\overline{\chi_{a_2}(x)\varphi_s'(y)\varphi_t'(z)}\\
&=&\sum_{(x,y,z)\in D_5}\chi((a_1-a_2)x)\varphi_s\overline{\varphi_s'}(y)\varphi_t\overline{\varphi_t'}(z)\\
&=&\sum_{x,y\in \Ff_q^*,y\neq1}\chi(ax)\varphi(y)\varphi'(x(1-y)) \ \ (\hbox{where $a=a_1-a_2$, $\varphi=\varphi_s\overline{\varphi_s'}$, and $\varphi'=\varphi_t\overline{\varphi_t'}$})\\
&=& \sum_{x\in \Ff_q^*}\chi(ax)\varphi'(x)\sum_{y\in \Ff_q^*,y\neq1}\varphi(y)\varphi'(1-y)\\
&=&G(\varphi',\chi_a)(J(\varphi,\varphi')-\varphi(0)\varphi'(1)-\varphi(1)\varphi'(0)).
\end{eqnarray*}

When $a=0$, since $\cF\neq\cF'$, at least one of $\varphi$ and $\varphi'$ is nontrivial, by Lemma \ref{gauss} and \ref{jacobi}, we have
$$
 K_5\cF\cF'^H=\left\{
            \begin{array}{ll}
              (-1)(q-1),& \hbox{$\varphi'$ is trivial, and $\varphi$ is nontrivial,}\\
              0 ,& \hbox{$\varphi'$ are nontrivial.}
            \end{array}
          \right.
$$

When $a\neq0$, by Lemma \ref{gauss} and \ref{jacobi}, we have

$$
 K_5\cF\cF'^H=\left\{
            \begin{array}{ll}
              (-1)(q-2), & \hbox{both $\varphi'$ and $\varphi$ are trivial,} \\
              (-1)(-1),& \hbox{$\varphi'$ is trivial, and $\varphi$ is nontrivial,}\\
              \overline{\varphi'}(a)g(\varphi')(-1),& \hbox{$\varphi'$ is nontrivial, and $\varphi$ is trivial,}\\
              \overline{\varphi'}(a)g(\varphi')J(\varphi,\varphi'),& \hbox{both $\varphi'$ and  $\varphi$ are nontrivial.}
            \end{array}
          \right.
$$

Therefore, we have $$I_{max}(\cc_5)=max\{|\cF\cF'^H|:\cF,\cF'\in \cc, and\ \cF\neq\cF'\}=\frac{q}{K_5}=\frac{q}{(q-1)(q-2)},$$
the maximal value obtained when all of $\varphi, \varphi'$ and $\varphi\varphi'$ are nontrivial.
\end{proof}

Using Theorem \ref{th51}, we can derive the ratio of $I_{max}(\cc_5)$ of the proposed codebooks to that of the MWBE codebooks and show the near-optimality of the proposed codebooks as in the following theorem.

\begin{thm}\label{th52}
Let $I_{W5}$ be the Welch bound equality, for the given $N_5$, $K_5$ in the current section. We have

$$\lim_{r\rightarrow\infty}\frac{I_{max}(\cc_5)}{I_{W5}}=1,$$
then the codebooks we proposed are near optimal.

\end{thm}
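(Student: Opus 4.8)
The plan is to follow verbatim the template already used for Theorems \ref{th02}, \ref{th22}, \ref{th32} and \ref{th42}: insert the explicit parameters $N_5=(q-1)^2q$ and $K_5=(q-1)(q-2)$, together with the value $I_{max}(\cc_5)=\frac{q}{(q-1)(q-2)}$ supplied by Theorem \ref{th51}, into the Welch bound $I_{W5}=\sqrt{(N_5-K_5)/((N_5-1)K_5)}$, form the ratio $I_{max}(\cc_5)/I_{W5}$, and check that it converges to $1$ as $q\to\infty$.

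First I would compute the Welch bound. The numerator of the radicand is $N_5-K_5=(q-1)^2q-(q-1)(q-2)$, and the key algebraic step is to pull out the common factor $(q-1)$, giving the factorization $N_5-K_5=(q-1)(q^2-2q+2)$. The denominator is $(N_5-1)K_5=\left((q-1)^2q-1\right)(q-1)(q-2)$, which shares the same factor $(q-1)$. After cancellation I would obtain
\begin{equation*}
I_{W5}=\sqrt{\frac{q^2-2q+2}{\left((q-1)^2q-1\right)(q-2)}}.
\end{equation*}
Dividing $I_{max}(\cc_5)=\frac{q}{(q-1)(q-2)}$ by this quantity and simplifying (absorbing one power of $\sqrt{q-2}$ into the radical) then yields
\begin{equation*}
\frac{I_{max}(\cc_5)}{I_{W5}}=\frac{q}{q-1}\sqrt{\frac{(q-1)^2q-1}{(q^2-2q+2)(q-2)}}.
\end{equation*}

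To read off the limit I would simply compare leading terms: the prefactor $\frac{q}{q-1}\to1$, while inside the radical both the numerator $(q-1)^2q-1$ and the denominator $(q^2-2q+2)(q-2)$ are monic cubics in $q$, so the radical also tends to $1$; hence $\lim_{q\to\infty}I_{max}(\cc_5)/I_{W5}=1$ and $\cc_5$ is asymptotically optimal. I do not expect any genuine obstacle here—the argument is a routine simplification followed by a degree count under a square root, exactly parallel to the four preceding near-optimality proofs. The only point demanding slight care is the correct factorization $N_5-K_5=(q-1)(q^2-2q+2)$ and the matching cancellation in the denominator; once these are in place the cubic-over-cubic comparison makes the limit immediate.
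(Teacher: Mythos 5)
Your proposal is correct and follows essentially the same route as the paper: the same factorization $N_5-K_5=(q-1)(q^2-2q+2)$, the same cancellation yielding $I_{W5}=\sqrt{\frac{q^2-2q+2}{((q-1)^2q-1)(q-2)}}$, and the identical ratio $\frac{q}{q-1}\sqrt{\frac{(q-1)^2q-1}{(q^2-2q+2)(q-2)}}$ whose limit is $1$. Your explicit cubic-over-cubic degree count is just a slightly more detailed justification of the limit the paper declares obvious.
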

\begin{proof}
Note that $N_5=(q-1)^2 q$ and $K_5=(q-1)(q-2)$. Then the corresponding Welch bound is
$$
I_{W5}=\sqrt{\frac{N_5-K_5}{(N_5-1)K_5}}=\sqrt{\frac{(q-1)^2 q-(q-1)(q-2)}{((q-1)^2 q-1)(q-1)(q-2)}}=\sqrt{\frac{q^2-2q+2}{(q(q-1)^2-1)(q-2)}},
$$
we have
$$\frac{I_{max}(\cc_5)}{I_{W5}}=\frac{q}{q-1}\sqrt{\frac{q(q-1)^2-1}{(q^2-2q+2)(q-2)}}.$$

It is obvious that $\lim_{q\rightarrow+\infty }\frac{I_{max}(\cc_5)}{I_{W5}}=1$. The codebook $\cc_5$ asymptotically meets the Welch bound. This completes the proof.
\end{proof}

In Table \ref{table6}, we provide some explicit values of the parameters of the codebooks we proposed for some given $q$, and corresponding numerical data of the Welch bound for comparison. The numerical results show  that the codebooks $\cc_5$ asymptotically meet the Welch bound.
\begin{table}[!htbp]
 \newcommand{\tabincell}[2]{\begin{tabular}{@{}#1@{}}#2\end{tabular}}
 \caption{\footnotesize Parameters of the $(N_5,K_5)$ codebook of Section IV}
 \label{table6}
 \centering
 \setlength{\tabcolsep}{1mm}{
 \begin{tabular}{|c|c|c|c|c|c|}
   \hline
  $q$ & $N_5$ & $K_5$ & $I_{max}(\cc_5)$ & $I_{W5}$ & $\frac{I_{max}(\cc_5)}{I_{W5}}$ \\ \hline
 $3$& $12$& $2$& $1.5000$& $0.6742$& $2.2249$ \\ \hline
 $5$& $80$& $12$& $0.4167$& $0.2678$& $1.5557$ \\ \hline
 $13$& $1872$& $132$& $0.0985$& $0.0839$& $1.1733$ \\ \hline
 $49$& $112896$& $2256$& $0.0217$& $0.0208$& $1.0421$ \\ \hline
 $5^{3}$& $1922000$& $15252$& $0.0082$& $0.0081$& $1.0162$ \\ \hline
 $5^{4}$& $243360000$& $388752$& $0.0016$& $0.0016$& $1.0032$ \\ \hline
 $7^{4}$& $1.3830e+10$& $5757600$& $4.1701e-04$& $4.1667e-04$& $1.0008$ \\ \hline

 \end{tabular}}
 \end{table}

\subsection{The sixth construction of codebooks}
Let

$$
D_6:=\{(x,y,z)\in \Ff_{q}^*\times\Ff_q^*\times\Ff_q^*: z=(1-x)(1-y)\}.
$$
Then $\#D_6=(q-2)^2$.

The codebook $\cc_6$ is constructed as
$$\cc_6:=\{\frac{1}{q-2}(\varphi_i(x)\varphi_j(y)\varphi_k(z))_{(x,y,z)\in D_6}:\varphi_i,\varphi_j, \varphi_k\in \widehat{\Ff_q^*} \}.$$
We can derive the following Theorem.

\begin{thm}\label{th61}
 $\cc_6$ is a codebook with $N_6=(q-1)^3$, $K_6=(q-2)^2$ and $I_{max}(\cc_6)=\frac{q}{(q-
 2)^2}$.
\end{thm}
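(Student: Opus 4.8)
The plan is to reuse the template of the previous constructions: count $N_6$ and $K_6$ directly, reduce the Hermitian inner product of two distinct codewords to a product of two truncated Jacobi sums, and then read off the maximum magnitude from Lemma~\ref{jacobi}. For the counting, a codeword of $\cc_6$ is indexed by a triple $(\varphi_i,\varphi_j,\varphi_k)\in(\widehat{\Ff_q^*})^3$, so $N_6=(q-1)^3$; and since $z=(1-x)(1-y)\in\Ff_q^*$ forces $x\neq1$ and $y\neq1$ on top of $x,y\in\Ff_q^*$, we get $\#D_6=(q-2)^2$, hence $K_6=(q-2)^2$.

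Next I would take codewords $\cF,\cF'$ with parameters $(\varphi_s,\varphi_t,\varphi_u)$ and $(\varphi_s',\varphi_t',\varphi_u')$, put $\varphi=\varphi_s\overline{\varphi_s'}$, $\psi=\varphi_t\overline{\varphi_t'}$, $\rho=\varphi_u\overline{\varphi_u'}$, and write
$$K_6\cF\cF'^H=\sum_{\substack{x,y\in\Ff_q^*\\x\neq1,\,y\neq1}}\varphi(x)\psi(y)\rho\big((1-x)(1-y)\big).$$
The key point is that $(1-x),(1-y)\in\Ff_q^*$ on this range, so multiplicativity gives $\rho((1-x)(1-y))=\rho(1-x)\rho(1-y)$ and the sum factors as a product of a sum in $x$ and a sum in $y$. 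Writing $c_1=x$, $c_2=1-x$ and using the extended convention $\varphi(0)=1$ iff $\varphi=\varphi_0$, each factor is a Jacobi sum with its two boundary terms stripped off:
$$\sum_{x\in\Ff_q^*,\,x\neq1}\varphi(x)\rho(1-x)=J(\varphi,\rho)-\varphi(0)-\rho(0),$$
and likewise for the $y$-sum with $\psi$ in place of $\varphi$. Thus $K_6\cF\cF'^H=A\cdot B$ with $A=J(\varphi,\rho)-\varphi(0)-\rho(0)$ and $B=J(\psi,\rho)-\psi(0)-\rho(0)$.

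The remaining work is to maximize $|A||B|$ over all $\cF\neq\cF'$, i.e.\@ over all $(\varphi,\psi,\rho)$ not all trivial. For a single factor, Lemma~\ref{jacobi} gives $q-2$ when both arguments are trivial, $-1$ when exactly one is trivial, a Jacobi sum of modulus $1$ when both are nontrivial but their product is trivial, and a Jacobi sum of modulus $\sqrt{q}$ when both are nontrivial with nontrivial product. Since $\rho$ is shared by $A$ and $B$, I would split on $\rho$: if $\rho$ is trivial the best product is $(q-2)\cdot1=q-2$, while if $\rho$ is nontrivial one can choose $\varphi,\psi$ nontrivial with $\varphi\rho,\psi\rho$ nontrivial, giving $|A||B|=\sqrt{q}\cdot\sqrt{q}=q$. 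The main obstacle is precisely this coupled maximization: one must check that the doubly-nontrivial case $|J|=\sqrt{q}$ in both factors is simultaneously realizable (which needs enough nontrivial characters, hence $q$ large enough) and that it beats the competing value $q-2$. As $q>q-2$, the maximum is $q$, and therefore $I_{max}(\cc_6)=q/K_6=q/(q-2)^2$.
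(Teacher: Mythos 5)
Your proof follows essentially the same route as the paper's: the identical factorization of $K_6\cF\cF'^H$ into two truncated Jacobi sums $\bigl(J(\varphi,\rho)-\varphi(0)\rho(1)-\varphi(1)\rho(0)\bigr)\bigl(J(\psi,\rho)-\psi(0)\rho(1)-\psi(1)\rho(0)\bigr)$, the same case split on whether the shared character $\rho$ is trivial, and the same appeal to Lemma~\ref{jacobi} to conclude the maximum magnitude $q$ is attained when $\varphi,\psi,\rho,\varphi\rho,\psi\rho$ are all nontrivial. The one difference is that you explicitly flag the realizability of this extremal case (which requires a character of order at least $3$, hence $q\geq 4$), a point the paper passes over silently even though its claimed value $I_{max}(\cc_6)=q/(q-2)^2$ cannot hold at $q=3$ where $K_6=1$.
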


\begin{proof}
By the definition of $\cc_6$, it contains $(q-1)^3$ codewords of length $(q-2)^2$. Then it is easy to see $N_6=(q-1)^3$ and $K_6=(q-2)^2$. Let $\cF$ and $\cF'$ be any different codewords in $\cc_6$, $\cF=\frac{1}{q-2}(\varphi_s(x)\varphi_u(y)\varphi_v(z))_{(x,y,z)\in D_6}$ and $\cF'=\frac{1}{q-2}(\varphi_s'(x)\varphi_u'(y)\varphi_v'(z))_{(x,y,z)\in D_6}$, where $\varphi_s,\varphi_u,\varphi_v,\varphi_s',\varphi_u',\varphi_v'\in \widehat{\Ff_q^*}$. Then the correlation of $\cF$ and $\cF'$ is as follows.

\begin{eqnarray*}
&&K\cF\cF'^H\\
&=&\sum_{(x,y,z)\in D_6}\varphi_s(x)\varphi_u(y)\varphi_v(z)\overline{\varphi_s'(x)\varphi_u'(y)\varphi_v'(z)}\\
&=&\sum_{(x,y,z)\in D_6}\varphi_s\overline{\varphi_s'}(x)\varphi_u\overline{\varphi_u'}(y)\varphi_v\overline{\varphi_v'}(z)\\
&=&\sum_{x,y\in \Ff_q^*,x\neq1,y\neq 1}\varphi_i(x)\varphi_j(y)\varphi_k((1-x)(1-y)) \ \ (\hbox{where $\varphi_i=\varphi_s\overline{\varphi_s'}$, $\varphi_j=\varphi_u\overline{\varphi_u'}$, and $\varphi_k=\varphi_v\overline{\varphi_v'}$})\\
&=& \sum_{x\in \Ff_q^*,x\neq 1}\varphi_i(x)\varphi_k(1-x)\sum_{y\in \Ff_q^*,y\neq1}\varphi_j(y)\varphi_k(1-y)\\
&=&(J(\varphi_i,\varphi_k)-\varphi_i(0)\varphi_k(1)-\varphi_i(1)\varphi_k(0))(J(\varphi_j,\varphi_k)-\varphi_j(0)\varphi_k(1)
-\varphi_j(1)\varphi_k(0)).
\end{eqnarray*}

When $\varphi_k$ is nontrivial, since $\cF\neq\cF'$, at least one of $\varphi_i$ and $\varphi_j$ is nontrivial, by Lemma \ref{jacobi}, we have
$$
 K_6\cF\cF'^H=\left\{
            \begin{array}{ll}
              (-1)(q-2),& \hbox{$\varphi_i$ is trivial, and $\varphi_j$ is nontrivial,}\\
              (-1)(q-2),& \hbox{$\varphi_i$ is nontrivial, and $\varphi_j$ is trivial,}\\
              (-1)(-1) ,& \hbox{both $\varphi_i$ and $\varphi_j$ are nontrivial.}
            \end{array}
          \right.
$$

When $\varphi_k$ is nontrivial, by Lemma \ref{jacobi}, we have

$$
 K_6\cF\cF'^H=\left\{
            \begin{array}{ll}
              (-1)(-1), & \hbox{both $\varphi_i$ and $\varphi_j$ are trivial,} \\
              (-1)J(\varphi_j,\varphi_k),& \hbox{$\varphi_i$ is trivial, and $\varphi_j$ is nontrivial,}\\
               (-1)J(\varphi_i,\varphi_k),& \hbox{$\varphi_i$ is nontrivial, and  $\varphi_j$ is trivial,}\\
              J(\varphi_i,\varphi_k)J(\varphi_j,\varphi_k),& \hbox{both $\varphi_i$ and  $\varphi_j$ are nontrivial.}
            \end{array}
          \right.
$$

Therefore, we have $$I_{max}(\cc_6)=max\{|\cF\cF'^H|:\cF,\cF'\in \cc, and\ \cF\neq\cF'\}=\frac{q}{K_6}=\frac{q}{(q-2)^2},$$
the maximal value obtained when all of $\varphi_i,\varphi_j,\varphi_k,\varphi_i\varphi_k$  and $\varphi_j\varphi_k$ are nontrivial.
\end{proof}

Using Theorem \ref{th61}, we can derive the ratio of $I_{max}(\cc_6)$ of the proposed codebooks to that of the MWBE codebooks and show the near-optimality of the proposed codebooks as in the following theorem.

\begin{thm}\label{th62}
Let $I_{W6}$ be the Welch bound equality, for the given $N_6$, $K_6$ in the current section. We have

$$\lim_{r\rightarrow\infty}\frac{I_{max}(\cc_6)}{I_{W6}}=1,$$
then the codebooks we proposed are near optimal.

\end{thm}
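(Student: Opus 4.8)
The plan is to follow verbatim the strategy used in the proofs of Theorems \ref{th02}, \ref{th22}, \ref{th32}, \ref{th42} and \ref{th52}: substitute the parameters $N_6=(q-1)^3$ and $K_6=(q-2)^2$ supplied by Theorem \ref{th61} directly into the Welch-bound formula, simplify, form the ratio $I_{max}(\cc_6)/I_{W6}$, and show that it tends to $1$ as $q\to\infty$ (the index $r$ appearing in the statement should be read as $q$, consistently with every preceding section).

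First I would compute the Welch bound. Starting from
$$I_{W6}=\sqrt{\frac{N_6-K_6}{(N_6-1)K_6}}=\sqrt{\frac{(q-1)^3-(q-2)^2}{((q-1)^3-1)(q-2)^2}},$$
expanding the cube and the square yields $(q-1)^3-(q-2)^2=q^3-4q^2+7q-5$ and $(q-1)^3-1=q^3-3q^2+3q-2$, so that
$$I_{W6}=\sqrt{\frac{q^3-4q^2+7q-5}{(q^3-3q^2+3q-2)(q-2)^2}}.$$
Next, using $I_{max}(\cc_6)=\frac{q}{(q-2)^2}$ from Theorem \ref{th61}, I would form the ratio and cancel one factor of $(q-2)$:
$$\frac{I_{max}(\cc_6)}{I_{W6}}=\frac{q}{(q-2)^2}\sqrt{\frac{(q^3-3q^2+3q-2)(q-2)^2}{q^3-4q^2+7q-5}}=\frac{q}{q-2}\sqrt{\frac{q^3-3q^2+3q-2}{q^3-4q^2+7q-5}}.$$
Since $\frac{q}{q-2}\to1$ and both polynomials under the radical are monic of degree three, their quotient also tends to $1$; hence $\lim_{q\to\infty}I_{max}(\cc_6)/I_{W6}=1$, and the codebook $\cc_6$ is near optimal.

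The argument presents no genuine obstacle, being purely elementary algebra. The only point requiring a little care is the correct expansion of $(q-1)^3-(q-2)^2$ and of $(q-1)^3-1$, so that one confirms that numerator and denominator under the radical share the same leading term $q^3$; the asymptotic optimality then follows immediately from the degree count, exactly as in the five previous cases.
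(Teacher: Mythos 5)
Your proposal is correct and follows exactly the paper's own argument: substitute $N_6=(q-1)^3$, $K_6=(q-2)^2$ into the Welch bound, obtain $I_{W6}=\frac{1}{q-2}\sqrt{\frac{q^3-4q^2+7q-5}{q^3-3q^2+3q-2}}$, form the ratio $\frac{I_{max}(\cc_6)}{I_{W6}}=\frac{q}{q-2}\sqrt{\frac{q^3-3q^2+3q-2}{q^3-4q^2+7q-5}}$, and let $q\to\infty$. Your expansions match the paper's, and your degree-count justification of the limit is a slightly more explicit version of the step the paper dismisses as ``obvious.''
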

\begin{proof}
Note that $N_6=(q-1)^3$ and $K_6=(q-2)^2$. Then the corresponding Welch bound is
$$
I_{W6}=\sqrt{\frac{N_6-K_6}{(N_6-1)K_6}}=\sqrt{\frac{(q-1)^3-(q-2)^2}{((q-1)^3-1)(q-2)^2}}=\frac{1}{q-2}\sqrt{\frac{q^3-4q^2+7q-5}{q^3-3q^2+3q-2}},
$$
we have
$$\frac{I_{max}(\cc_6)}{I_{W6}}=\frac{q}{q-2}\sqrt{\frac{q^3-3q^2+3q-2}{q^3-4q^2+7q-5}}.$$

It is obvious that $\lim_{q\rightarrow+\infty }\frac{I_{max}(\cc_6)}{I_{W6}}=1$. The codebook $\cc$ asymptotically meets the Welch bound. This completes the proof.
\end{proof}

In Table \ref{table7}, we provide some explicit values of the parameters of the codebooks we proposed for some given $q$, and corresponding numerical data of the Welch bound for comparison. The numerical results show  that the codebooks $\cc_6$ asymptotically meet the Welch bound.
\begin{table}[!htbp]
 \newcommand{\tabincell}[2]{\begin{tabular}{@{}#1@{}}#2\end{tabular}}
 \caption{\footnotesize Parameters of the $(N_6,K_6)$ codebook of Section IV}
 \label{table7}
 \centering
 \setlength{\tabcolsep}{1mm}{
 \begin{tabular}{|c|c|c|c|c|c|}
   \hline
  $q$ & $N_6$ & $K_6$ & $I_{max}(\cc_6)$ & $I_{W6}$ & $\frac{I_{max}(\cc_6)}{I_{W6}}$ \\ \hline
 $3$& $8$& $1$& $3$& $1$& $3$ \\ \hline
 $5$& $64$& $9$& $0.5556$& $0.3115$& $1.7838$ \\ \hline
 $13$& $1728$& $121$& $0.1074$& $0.0877$& $1.2251$ \\ \hline
 $49$& $110592$& $2209$& $0.0222$& $0.0211$& $1.0531$ \\ \hline
 $5^{3}$& $1906624$& $15129$& $0.0083$& $0.0081$& $1.0203$ \\ \hline
 $5^{4}$& $242970624$& $388129$& $0.0016$& $0.0016$& $1.0040$ \\ \hline
 $7^{4}$& $1.3824e+10$& $5755201$& $4.1719e-04$& $4.1675e-04$& $1.0010$ \\ \hline

 \end{tabular}}
 \end{table}

\section{Another family  of codebooks}
Based on the six constructions of codebooks in Section 3, more new codebooks can be derived, which are also near optimal.

Let $\mathcal{E}_n$ denote the set formed by the standard basis of the n-dimensional Hilbert space, combining with the preceding six constructions, we get the following result.

\begin{thm}\label{th71}
Let $\cc'_i=\cc_i\cup\mathcal{E}_{K_i}$. Then the codebooks $\cc'_i$ are all near optimal, $i=1,2,3,4,5,6$, and the parameters of the new codebooks are as follows:

$N_i'=N_i+K_i$, $K_i'=K_i$ and $I_{max}(\cc_i')=I_{max}(\cc_i)$. Specically,

(1) $N_1'=N_1+K_1=q^3+q^2$, $K_1'=K_1=q^2$ and $I_{max}(\cc_1')=I_{max}(\cc_1)=\frac{1}{q}$;

(2) $N_2'=N_2+K_2=(q^2-1)$, $K_2'=K_2=(q-1)q$ and $I_{max}(\cc_2')=I_{max}(\cc_2)=\frac{1}{q-1}$;

(3) $N_3'=N_3+K_3=q^3-2q+1$, $K_3'=K_3=(q-1)^2$ and $I_{max}(\cc_3')=I_{max}(\cc_3)=\frac{q}{(q-1)^2}$;

(4) $N_4'=N_4+K_4=q^3-q^2-q+1$, $K_4'=K_4=(q-1)^2$ and $I_{max}(\cc_4')=I_{max}(\cc_4)=\frac{q}{(q-1)^2}$;

(5) $N_5'=N_5+K_5=q^3-q^2-2q+2$, $K_5'=K_5=(q-1)(q-2)$ and $I_{max}(\cc_5')=I_{max}(\cc_5)=\frac{q}{(q-1)(q-2)}$;

(6) $N_5'=N_6+K_6=q^3-2q^2-q+3$, $K_6'=K_6=(q-2)^2$ and $I_{max}(\cc_5')=I_{max}(\cc_5)=\frac{q}{(q-2)^2}$.

\end{thm}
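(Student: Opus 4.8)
The plan is to split the inner products occurring in $\cc_i'=\cc_i\cup\mathcal{E}_{K_i}$ into three types and read off both $I_{max}(\cc_i')$ and the Welch ratio from them. First I would record the cardinalities. Each vector of $\mathcal{E}_{K_i}$ is a unit vector in $\mathbb{C}^{K_i}$ with a single nonzero coordinate, whereas every codeword of $\cc_i$ has all $K_i$ coordinates of modulus $1/\sqrt{K_i}$: each coordinate is $\frac{1}{\sqrt{K_i}}$ times a product of character values, and on each of $D_1,\dots,D_6$ the multiplicative characters are evaluated only at nonzero arguments, so every coordinate has modulus one. Hence $\mathcal{E}_{K_i}$ and $\cc_i$ are disjoint, $\cc_i'$ is a genuine $(N_i+K_i,\,K_i)$ codebook, and the listed values of $N_i'$ and $K_i'$ follow by substituting the parameters from Theorems \ref{th01}, \ref{th21}, \ref{th31}, \ref{th41}, \ref{th51}, \ref{th61}.

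Next I would bound the cross-correlations. Between two codewords of $\cc_i$ the modulus is at most $I_{max}(\cc_i)=q/K_i$, and that value is attained, by the corresponding theorem. Between two distinct standard basis vectors the inner product is $0$. Between a codeword $\cF=\frac{1}{\sqrt{K_i}}(f(x))_{x\in D_i}$ and a basis vector $e_k$, the product selects the single coordinate $\frac{1}{\sqrt{K_i}}f(x_k)$, whose modulus is exactly $1/\sqrt{K_i}$ by the observation above. Therefore $I_{max}(\cc_i')=\max\{\, q/K_i,\ 1/\sqrt{K_i}\,\}$.

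The key step, and the only genuine content of the argument, is the inequality $1/\sqrt{K_i}\le q/K_i$, equivalently $K_i\le q^2$. This holds uniformly across the six constructions, since $K_1=q^2$ while $K_2=(q-1)q$, $K_3=K_4=(q-1)^2$, $K_5=(q-1)(q-2)$ and $K_6=(q-2)^2$ are all at most $q^2$. Consequently the added basis vectors introduce no correlation exceeding the existing maximum, and $I_{max}(\cc_i')=q/K_i=I_{max}(\cc_i)$. Everything else is bookkeeping; I expect no further obstacle here.

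Finally, for near-optimality I would substitute $N_i'=N_i+K_i$ and $K_i'=K_i$ into the Welch bound and form
$$\frac{I_{max}(\cc_i')}{I_{W,i'}}=\frac{q}{K_i}\sqrt{\frac{(N_i'-1)K_i}{N_i'-K_i}}.$$
Since $N_i'=N_i+K_i$ is of order $q^3$ while $K_i$ is of order $q^2$, we have $N_i'-1\sim N_i'$ and $N_i'-K_i\sim N_i'$, so the radicand is asymptotic to $K_i$ and the whole ratio is asymptotic to $q/\sqrt{K_i}\to 1$ as $q\to\infty$. Equivalently, $N_i'/N_i\to 1$ forces $I_{W,i'}/I_{W,i}\to 1$, so asymptotic optimality is inherited directly from Theorems \ref{th02}, \ref{th22}, \ref{th32}, \ref{th42}, \ref{th52}, \ref{th62}. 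Carrying out the six explicit simplifications exactly as in those theorems completes the proof.
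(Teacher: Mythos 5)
Your proposal is correct and follows essentially the same route as the paper: the paper likewise splits the inner products in $\cc_i'$ into the three types (both in $\cc_i$, both in $\mathcal{E}_{K_i}$, mixed), observes the mixed correlations equal $1/\sqrt{K_i}$, and then recomputes the Welch ratio for the enlarged parameters --- though it writes this out only for $i=2$ and declares the other cases similar. Your uniform treatment via the identity $I_{max}(\cc_i)=q/K_i$ and the single inequality $K_i\le q^2$ is a cleaner packaging of the same argument and in fact covers the five cases the paper leaves to the reader.
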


\begin{proof}
We only prove the case when $i=2$, the other cases can be proved as a similar way.  It is easy to see $N_2'=N_2+K_2=(q^2-1)q$ and $K_2'=K_2=(q-1)q$. Let $\cF$ and $\cF'$ be any different codewords in $\cc_2'$, the correlation of $\cF$ and $\cF'$ can be discussed in the following cases.

Case 1: If $\cF$, $\cF'\in\ \cc_2$, by Theorem \ref{th21}, we have
$$
 max\{|\cF\cF'^H|:\cF,\cF'\in \cc_2, and\ \cF\neq\cF'\}=\frac{1}{q-1}
$$

Case 2: If $\cF\in \cc_2$ and $\cF'\in\mathcal{E}_{(q-1)q}$ (or $\cF'\in \cc_2$ and $\cF\in\mathcal{E}_{(q-1)q}$), it is easy to see
$$|\cF\cF'^H|=\frac{1}{\sqrt{(q-1)q}}.$$

Case 3: If $\cF$ and $\cF'\in \mathcal{E}_{(q-1)q}$, it is obvious that $\cF\cF'^H=0.$

From the above three cases, we have $$I_{max}(\cc_2')=\frac{1}{q-1}.$$

Note that $N_2'=(q^2-1)q$ and $K_2'=(q-1)q$, then the corresponding Welch bound is
$$
I_W=\sqrt{\frac{N_2'-K_2'}{(N_2'-1)K_2'}}=\sqrt{\frac{q}{q^3-q-1}},
$$
we have
$$\frac{I_{max}(\cc_2')}{I_W}=\sqrt{\frac{q^3-q-1}{(q-1)^2q}}.$$

It is obvious that $\lim_{q\rightarrow+\infty }\frac{I_{max}(\cc_2')}{I_W}=1$. The codebook $\cc_2'$ asymptotically meets the Welch bound. This completes the proof.
\end{proof}

\section{Concluding remarks}

In this paper, we presented six new constructions of codebooks asymptotically achieve the Welch bounds with additive characters, multiplicative characters and character sums of finite fields. Actually, the first construction in our paper is  equivalent to the measurement matrix in \cite{MM}.
 The advantage of our construction is that it can be generalized naturally to  the other five constructions of codebooks which are also near optimal.

\end{document}